\newtheorem{lemma}{Lemma}
\newtheorem{theorem}[lemma]{Theorem}
\newtheorem{theorem*}[lemma]{Theorem*}
\newtheorem{observation}[lemma]{Observation}
\newtheorem{fact}[lemma]{Fact}
\title{An $O(n \log n)$-Time Approximation Scheme for Geometric Many-to-Many Matching}
\author{Sayan Bandyapadhyay\footnote{Portland State University, USA. \texttt{sayanb@pdx.edu}.} \and Jie Xue\footnote{New York University Shanghai, China. \texttt{jiexue@nyu.edu}.}}
\date{}
\def\subparagraph{\paragraph}
\begin{document}

\maketitle

\bibliographystyle{plainurl}

\begin{abstract}
Geometric matching is an important topic in computational geometry and has been extensively studied over decades.
In this paper, we study a geometric-matching problem, known as \textit{geometric many-to-many matching}.
In this problem, the input is a set $S$ of $n$ colored points in $\mathbb{R}^d$, which implicitly defines a graph $G = (S,E(S))$ where $E(S) = \{(p,q): p,q \in S \text{ have different colors}\}$, and the goal is to compute a minimum-cost subset $E^* \subseteq E(S)$ of edges that cover all points in $S$.
Here the cost of $E^*$ is the sum of the costs of all edges in $E^*$, where the cost of a single edge $e$ is the Euclidean distance (or more generally, the $L_p$-distance) between the two endpoints of $e$.
Our main result is a $(1+\varepsilon)$-approximation algorithm with an optimal running time $O_\varepsilon(n \log n)$ for geometric many-to-many matching in any fixed dimension, which works under any $L_p$-norm.
This is the first near-linear approximation scheme for the problem in any $d \geq 2$.
Prior to this work, only the bipartite case of geometric many-to-many matching was considered in $\mathbb{R}^1$ and $\mathbb{R}^2$, and the best known approximation scheme in $\mathbb{R}^2$ takes $O_\varepsilon(n^{1.5} \cdot \mathsf{poly}(\log n))$ time.
\end{abstract}

\section{Introduction}
A central topic in computational geometry is the study of optimization problems on edge-weighted graphs that are defined geometrically (sometimes known as \textit{geometric graphs}).
Typically, geometric graphs use points in $\mathbb{R}^d$ as their vertices, and the Euclidean distance (or distance under other norms) between two points naturally defines the weight of an edge.
Many fundamental graph optimization problems have been investigated on geometric graphs, including minimum spanning tree~\cite{agarwal1991euclidean,march2010fast,shamos1975closest}, Steiner trees~\cite{bandyapadhyay2024euclidean,brazil2014history,smith1992find}, traveling salesman problem~\cite{arora1996polynomial,arora1997nearly,de2018eth}, spanners~\cite{arya1995euclidean,har2013euclidean,le2022truly}, matching~\cite{agarwal2022deterministic,varadarajan1998divide,varadarajan1999approximation}, etc.
By exploiting the underlying geometry, these problems can usually be solved much more efficiently on geometric graphs, compared to general edge-weighted graphs.

In this paper, we focus on a particularly important class of problems on geometric graphs, the matching-related problems.
These problems have applications in a wide range of areas, e.g., computational biology \cite{ben2002restriction}, data mining \cite{belhadi2020data}, computational music \cite{toussaint2004geometry,toussaint2004comparison}, machine learning \cite{ristoski2018machine}, etc.
When studying matching problems on geometric graphs, there are two settings commonly used in the literature.
The first one, called the \textit{bipartite} setting, requires the input points to be bichromatic and the graph considered is the complete bipartite graph consisting of edges between points with different colors.
The second one, called the \textit{complete} setting, simply considers the complete graph induced by the input points.

One of the matching problems that has received most attention on geometric graphs is the classical \textit{minimum-weight perfect matching} problem, where the goal is to compute a set of vertex-disjoint edges with minimum total weight that cover all vertices.
The problem can be solved in $O(nm+n^2 \log n)$ time on any graph with $n$ vertices and $m$ edges, by the seminal work of Fredman and Tarjan~\cite{fredman1987fibonacci}.
A popular line of research~\cite{agarwal2022deterministic,indyk2007near,kaplan2020dynamic,RaghvendraA20,vaidya1989geometry,varadarajan1998divide,varadarajan1999approximation} in computational geometry investigated minimum-weight perfect matching on geometric graphs, leading to much faster exact and approximation algorithms.
The exact algorithms were designed for the problem in $\mathbb{R}^2$.
The best algorithm for the bipartite setting~\cite{kaplan2020dynamic} runs in $O(n^2 \cdot \mathsf{poly}(\log n))$ time where $n$ is the number of input points, while the best algorithm for the complete setting~\cite{varadarajan1998divide} runs in $O(n^{1.5} \cdot \mathsf{poly}(\log n))$ time.
The approximation algorithms are much more general and efficient.
It was known that in both bipartite and complete settings, the problem admits $(1+\varepsilon)$-approximation algorithms with running time $O_\varepsilon(n \cdot \mathsf{poly}(\log n))$\footnote{Here $O_\varepsilon(\cdot)$ hides factors depending only on $\varepsilon$.} in $\mathbb{R}^d$ for any fixed $d$~\cite{agarwal2022deterministic,RaghvendraA20,varadarajan1999approximation}.

The minimum-weight perfect matching problem has an interesting variant, which is also a classical problem known as \textit{many-to-many matching} or \textit{edge cover} \cite{FerdousPK18,gallagher20194,keijsper1998efficient,ni2008fuzzy,norman1959algorithm,white1971minimum}.
In this variant, the only difference is that the edges in the solution need not to be vertex-disjoint.
In other words, it simply asks for a set of edges with minimum total weight that cover all vertices.
Many-to-many matching can be reduced to minimum-weight perfect matching~\cite{keijsper1998efficient}, and is thus polynomial-time solvable.
On geometric graphs, many-to-many matching, while having received less attention than minimum-weight perfect matching, also has a long history.
Eiter and Mannila~\cite{eiter1997distance} introduced the problem for the first time in 1997, under the name of \textit{link distance}, in order to mesure the similarity between two sets of points.
Colannino and Toussaint~\cite{colannino2005faster} considered geometric many-to-many matching in the bipartite setting and showed that the problem can be solved in $O(n^2)$ time in $\mathbb{R}^1$.
Later, Colannino et al.~\cite{ColanninoDHLMRST07} improved this result and obtained an optimal $O(n\log n)$-time algorithm, which completely settles the complexity of (bipartite) geometric many-to-many matching in $\mathbb{R}^1$.
Several variants of the problem in $\mathbb{R}^1$ have also been considered~\cite{Rajabi-AlniB16, abs-1904-05184, abs-1904-03015}, in which the input points can have capacities and/or demands.
Recently, Bandypadhyay et al.~\cite{bandyapadhyay2021exact} studied bipartite geometric many-to-many matching in $\mathbb{R}^2$ and designed two algorithms.
The first algorithm solves the problem in $O(n^2\cdot \mathsf{poly}(\log n))$ time.
This algorithm is based on the general reduction from many-to-many matching to perfect matching, and exploits various geometric data structures to implement the reduction and the Hungarian algorithm~\cite{kuhn1956variants} for perfect matching in an efficient way.
The second algorithm is a $(1+\varepsilon)$-approximation algorithm which runs in $O_{\varepsilon}(n^{1.5}\cdot \mathsf{poly}(\log n))$ time.
The basic idea of this algorithm is similar to the first one, but it uses the multi-scale algorithm of Gabow and Tarjan~\cite{gabow1989faster} for perfect matching instead of the Hungarian algorithm, which can be implemented more efficiently in the geometric setting by losing a factor of at most $1+\varepsilon$ in cost.
In higher dimensions, no nontrivial results for geometric many-to-many matching were known, to the best of our knowledge.

As one can see in the above discussion, in terms of exact algorithms, the best known bounds for geometric many-to-many matching are similar to the best known bounds for geometric (minimum-weight) perfect matching, at least in the bipartite setting --- both problems can be solved in near-quadratic time in $\mathbb{R}^2$~\cite{bandyapadhyay2021exact,kaplan2020dynamic} and are open in higher dimensions.
However, in terms of approximation algorithms, geometric many-to-many matching is much less well-understood than geometric perfect matching.
Even in $\mathbb{R}^2$, no approximation scheme for geometric many-to-many matching with near-linear running time was known, while geometric perfect matching admits near-linear approximation schemes in any fixed dimension~\cite{agarwal2022deterministic,RaghvendraA20,varadarajan1999approximation}.
This motivates the following natural question, which is the subject of this paper.

\begin{tcolorbox}[colback=gray!5!white,colframe=gray!75!black]
\textbf{Question:}
Does geometric many-to-many matching admit a $(1+\varepsilon)$-approximation algorithm with running time $O_\varepsilon(n \cdot \mathsf{poly}(\log n))$ in $\mathbb{R}^d$ for any fixed $d$?
\end{tcolorbox}

We answer this question affirmatively by giving such an approximation scheme for geometric many-to-many matching.
In fact, our results are much stronger and more general.
Our algorithm has an optimal $O_\varepsilon(n \log n)$ running time, works under any $L_p$-norm, and applies to both bipartite and complete settings (and beyond).
We shall discuss our results in detail in the next section.

\subsection{Our results}

We study geometric many-to-many matching in a \textit{colored} setting which simultaneously generalizes the aforementioned bipartite and complete settings for geometric graphs.
Here, each input point is associated with a color (the total number of colors can be unbounded) and the graph considered has edges between every pair of points with different colors.
Clearly, the colored setting is equivalent to the bipartite setting when there are only two colors and is equivalent to the complete setting when all points have distinct colors.
Let $S$ be a set of colored points in $\mathbb{R}^d$.
We write $E(S) = \{(p,q): p,q \in S \text{ have different colors}\}$ as the edge set of the geometric graph induced by $S$.
The \textit{$L_p$-cost} of an edge $e \in E(S)$ is the $L_p$-distance between its two endpoints, and the \textit{$L_p$-cost} of a subset $E \subseteq E(S)$ is the sum of the $L_p$-costs of all edges in $E$.
We say $E \subseteq E(S)$ \textit{covers} a point in $S$ if the point is an endpoint of an edge in $E$.
The \textit{geometric many-to-many matching} problem is formally defined as follows.

\begin{tcolorbox}[colback=gray!5!white,colframe=gray!75!black]
{\sc Geometric Many-to-Many Matching} \\
\textbf{Input:} A set $S$ of $n$ colored points in $\mathbb{R}^d$. \\
\textbf{Goal:} Compute $E^* \subseteq E(S)$ with minimum $L_p$-cost which covers all points in $S$.
\end{tcolorbox}

When studying the problem, we assume that the input points in $S$ are already clustered by their colors so that we do not need extra time to compute the color-partition of $S$.
Alternatively, one can assume the colors belong to $[n] = \{1,\dots,n\}$ and thus the color-partition of $S$ can be computed in linear time.
Our main result is the following theorem.

\begin{theorem} \label{thm-main}
    For any fixed $d \in \mathbb{N}$ and $p \geq 1$, geometric many-to-many matching in $\mathbb{R}^d$ under the $L_p$-norm admits a $(1+\varepsilon)$-approximation algorithm with running time $O_\varepsilon(n \log n)$.
\end{theorem}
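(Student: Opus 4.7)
The plan is to combine a structural property of optimal solutions with a randomly shifted hierarchical space decomposition. The starting observation is that every inclusion-minimal edge cover of $S$ is a disjoint union of stars: if both endpoints of an edge $e$ had additional incident edges in the solution, then $e$ could be removed while preserving coverage and strictly decreasing cost. Consequently, any optimal many-to-many matching partitions $S$ into stars, each with a designated \emph{center} and a set of \emph{leaves} of colors differing from the center's color, and the problem becomes: compute a near-optimal star partition of $S$.

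I would then compute a compressed quadtree of $S$ in $O(n\log n)$ time, randomly shifted in the Arora style so that an edge of $L_p$-length $\ell$ is cut by a cell of side length $\Theta(\ell/\varepsilon)$ only with small probability. The next ingredient is a structural lemma asserting the existence of a $(1+\varepsilon)$-approximate solution in which every star is \emph{local} with respect to the quadtree, i.e., fits inside a constant number of cells at a single scale roughly proportional to the diameter of the star. The lemma is proved by an exchange argument that reroutes long leaves through nearby same-color representatives of the center, charging the incurred slack to $\varepsilon$ times the cost of the original star.

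Given such a localized near-optimal solution, the algorithm is a bottom-up dynamic program on the compressed quadtree. For each cell $C$, a subproblem records a constant-size \emph{interface} that summarizes how points inside $C$ interact with the outside: (i) a $(\log(1/\varepsilon)/\varepsilon)^{O(d)}$-bounded, per-color-per-portal bucketing of leaves inside $C$ to be attached to centers outside $C$, and (ii) the symmetric bucketing of centers inside $C$ that will receive leaves from outside. With $\varepsilon$-quantization of distances and the localization lemma, the number of interface states at each node is bounded by $f(\varepsilon,d)$, so the merge of children is an $O_\varepsilon(1)$-time operation and the total running time matches the compressed quadtree size of $O(n\log n)$. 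Unshifting and derandomizing the random grid via standard techniques adds only a multiplicative $\mathrm{poly}(1/\varepsilon)$ factor absorbed in $O_\varepsilon(\cdot)$.

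The hard part will be the structural lemma: showing that stars can be made local to the quadtree without losing more than a factor of $1+\varepsilon$. Unlike perfect matching, where each edge naturally has two endpoints at comparable scales, a star may have its center close to one cluster of leaves and far from another cluster, so that no single scale captures it. The rerouting must also respect the color constraint --- a leaf may only be attached to a center of a different color --- which forces the exchange argument to track color information through the quadtree. I anticipate handling this by introducing, per cell and per color, a small set of \emph{delegate} points that absorb incoming long leaves and forward them to the true center via a backbone path, so that the global bookkeeping collapses to the constant-size interface required by the dynamic program.
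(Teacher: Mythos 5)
Your approach is genuinely different from the paper's, and it is not a complete proof: several of the steps you flag as "the hard part" are indeed the crux, and as written they have gaps that are not obviously repairable.

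The paper does \emph{not} use stars, quadtrees, or portal-based dynamic programming at all. Instead, it works with the \emph{penalized} formulation: each point $p$ carries a penalty $\phi(p)$ equal to its nearest-foreign-neighbor distance, and an uncovered point pays $\phi(p)$. The key structural fact that drives everything is that an optimal solution to any subproblem uses only edges $e=(p,q)$ with $|e|\le\phi(p)+\phi(q)$. This lets them (a) apply Baker's shifting over the one-dimensional sequence of $\log\phi$-values to restrict to subsets with bounded $\phi$-spread, (b) apply a second, spatial grid-shifting to restrict to subproblems inside boxes of side $O_\varepsilon(\min_p\phi(p))$, and (c) refine such a box into $O_\varepsilon(1)$ sub-cells each of which is \emph{color-homogeneous} (any two points of different colors in the same tiny cell would contradict the $\phi$ lower bound). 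With $O_\varepsilon(1)$ monochromatic cells, the subproblem collapses to an ILP with $O_\varepsilon(1)$ variables, solved by the Lenstra-style FPT algorithm. Nowhere is a quadtree DP needed, and the unbounded number of colors is never a problem because the small cells are automatically monochromatic. Your proposal never invokes $\phi(p)$ as a structural quantity, and that is exactly what it is missing.

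The concrete gaps in your sketch. First, the interface of your DP is claimed to be a "per-color-per-portal bucketing," but the number of colors in the input is unbounded (up to $n$), and quadtree cells in general are not monochromatic; so the state space at an internal node is not $O_\varepsilon(1)$, and the DP does not run in near-linear time as described. You would need a reason why only $O_\varepsilon(1)$ colors can appear at the boundary of any cell in a near-optimal solution, and no such reason is given (nor is one true in general without some analogue of the paper's $\phi$-based homogeneity observation). Second, the localization lemma is asserted but not proved, and the proposed exchange ("reroute long leaves through nearby same-color representatives of the center") is dubious: the nearest point of the center's color to a given leaf could simply be the center itself, in which case there is no representative to reroute to. The correct rerouting target for a far-away leaf $\ell$ is its own nearest foreign neighbor $\mathsf{nn}(\ell)$, which is not necessarily the center's color; this is precisely the penalty idea, and once you adopt it you no longer need the star decomposition or the quadtree at all. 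Third, the star observation, while correct, does not actually help: a single star can have leaves at wildly different scales (the center is near one leaf and far from another), so "a single scale roughly proportional to the diameter of the star" is not a well-defined quantity that captures it, and you would be back to tracking an unbounded amount of cross-scale interface information. In short, your outline is an Arora-style plan that would work for problems where the interface is naturally bounded (TSP, Steiner tree, perfect matching), but it does not engage with the two features that make this problem different --- unbounded color count and unbounded scale-spread within a single star --- and the paper's $\phi$-based penalty formulation is precisely the tool that neutralizes both.
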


Note that the running time in Theorem~\ref{thm-main} is \textit{optimal} up to a factor depending on $\varepsilon$.
Indeed, as observed in~\cite{ColanninoDHLMRST07}, any approximation algorithm for geometric many-to-many matching in $\mathbb{R}^1$ requires $\Omega(n \log n)$ time, due to a reduction from set equality.

Interestingly, our algorithm in Theorem~\ref{thm-main} completely bypasses the reduction from many-to-many matching to minimum-weight perfect matching.
This allows us to avoid the techniques for perfect matching such as augmenting paths, which were commonly used in the previous algorithms for geometric matching problems~\cite{agarwal2022deterministic,bandyapadhyay2021exact,RaghvendraA20}.
Instead, our algorithm exploits the nice structures of the many-to-many matching problem itself, and solves the problem by nontrivially combining Baker's shifting technique~\cite{baker1994approximation}, grid techniques, approximate nearest-neighbor search~\cite{chan2020locality}, and the FPT algorithm for integer linear programming~\cite{CyganFKLMPPS15}.

\subparagraph{Organization.}
In Section~\ref{sec-pre}, we introduce the basic notions needed throughout the paper.
Section~\ref{sec-algo} presents our main algorithm and proves Theorem~\ref{thm-main}.
Finally, we conclude the paper and pose some open questions in Section~\ref{sec-conclusion}.

\section{Preliminaries} \label{sec-pre}
\subparagraph{Basic notations.}
We use $\mathbb{N}$ to denote the set of natural numbers including 0.
For a number $n \in \mathbb{N}$, we write $[n] = \{1,\dots,n\}$.
A \textit{colored point} in $\mathbb{R}^d$ is a point $p \in \mathbb{R}^d$ with a color, which we denote by $\mathsf{cl}(p)$.
Let $S$ be a set of colored points in $\mathbb{R}^d$.
We define $E(S) = \{(p,q): p,q \in S \text{ and } \mathsf{cl}(p) \neq \mathsf{cl}(q)\}$ as the \textit{edge set} on $S$.
Here, the pairs in $E(S)$ are \textit{unordered}, i.e., $(p,q)$ and $(q,p)$ are viewed as one element in $E(S)$.
For a subset $E \subseteq E(S)$, we denote by $V(E) \subseteq S$ the subset consisting of the endpoints of the edges in $E$.

\subparagraph{Foreign neighbors.}
Let $S$ be a set of colored points in $\mathbb{R}^d$.
For a point $p \in S$, a \textit{foreign neighbor} of $p$ in $S$ refers to another point $q \in S$ satisfying $\mathsf{cl}(q) \neq \mathsf{cl}(p)$.
We say $q$ is a \textit{$c$-approximate nearest} foreign neighbor of $p$ in $S$ (with respect to a metric $\mathsf{dist}:\mathbb{R}^d \times \mathbb{R}^d \rightarrow \mathbb{R}_{\geq 0}$) if $\textsf{dist}(p,q) \leq c \cdot \textsf{dist}(p,q')$ for all foreign neighbors $q'$ of $p$ in $S$.
The following lemma is a direct consequence of the dynamic approximate nearest neighbor data structure of Chan et al.~\cite{chan2020locality}.
\begin{lemma} \label{lem-anfn}
Given a set $S$ of $n$ colored points in $\mathbb{R}^d$, one can compute in $O_\varepsilon(n \log n)$ time a function $\mathsf{ann}:S \rightarrow S$ which maps each point in $S$ to a $(1+\varepsilon)$-approximate nearest foreign neighbor (with respect to the Euclidean distance) of that point in $S$.
The algorithm generalizes to the $L_p$-norm for any fixed $p \geq 1$.
\end{lemma}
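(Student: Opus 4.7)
The plan is to reduce the task to $O(n)$ insertions, deletions, and queries on the dynamic $(1+\varepsilon)$-approximate nearest neighbor (ANN) structure of Chan et al.~\cite{chan2020locality}, which supports each operation in $O_\varepsilon(\log n)$ time in $\mathbb{R}^d$ under any fixed $L_p$-norm. Since the input is already partitioned by color (or can be in $O(n)$ time as noted just before the theorem), I will iterate through the color classes $C_1,\dots,C_k$ in order.

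The algorithm proceeds as follows. First, I would initialize the dynamic ANN data structure by inserting all $n$ points of $S$. Then for each color class $C_i$, I would (i) delete every point of $C_i$ from the structure, so that the structure now stores exactly $S \setminus C_i$, i.e., the set of foreign neighbors of any point of color $\mathsf{cl}(C_i)$; (ii) for each $p \in C_i$, query the structure and set $\mathsf{ann}(p)$ to the returned $(1+\varepsilon)$-ANN; and (iii) re-insert the points of $C_i$ so that the structure is restored to $S$ before processing the next class.

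Correctness is essentially immediate: during the queries for class $C_i$, the structure contains exactly $S \setminus C_i$, which equals the set of foreign neighbors of $p$ for every $p \in C_i$, so the point returned by the ANN query is by definition a $(1+\varepsilon)$-approximate nearest foreign neighbor of $p$. For the running time, observe that across all classes the total number of deletions and re-insertions is at most $2 \sum_i |C_i| = 2n$, and the total number of queries is exactly $n$. Combined with the initial $n$ insertions and the $O_\varepsilon(\log n)$ per-operation bound of Chan et al., the overall cost is $O_\varepsilon(n \log n)$.

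There is no real obstacle here beyond quoting the right black box; the only minor point to mind is that the Chan et al.\ structure must be instantiated for the relevant $L_p$-norm (which it supports for any fixed $p \geq 1$), and that we tacitly assume each point has at least one foreign neighbor, so that a query on a nonempty structure always has a valid answer. If desired, one can detect the degenerate case ($|C_i| = n$) trivially at the start and return an undefined $\mathsf{ann}$, without affecting the running time.
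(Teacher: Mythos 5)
Your proposal is correct and takes essentially the same approach as the paper: both build the dynamic approximate nearest neighbor structure of Chan et al.\ on all of $S$, then for each color class delete it, query for each of its points, and re-insert, giving $O(n)$ operations at $O_\varepsilon(\log n)$ each. The paper's proof is identical in substance.
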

\begin{proof}
We shall apply the dynamic approximate nearest neighbor data structure $\mathcal{A}$ of Chan et al.~\cite{chan2020locality}.
The data structure $\mathcal{A}$ stores a dynamic set of points in $\mathbb{R}^d$, and supports $(1+\varepsilon)$-approximate nearest neighbor query to the point-set with query time $O_\varepsilon(\log n)$, where $n$ is the size of the current set.
Insertions and deletions are supported with update time $O_\varepsilon(\log n)$.
Suppose the points in $S$ have $r$ colors in total, and let $S_i \subseteq S$ consist of the points with the $i$-th color for $i \in [r]$.
We build the data structure $\mathcal{A}$ on $S$, which can be done in $O_\varepsilon(n \log n)$ time by inserting the points in $S$ to $\mathcal{A}$ one by one.
Then we consider each $i \in [r]$, and compute $\mathsf{ann}(p)$ for $p \in S_i$ as follows.
We first delete the points in $S_i$ from $\mathcal{A}$.
Then for every $p \in S_i$, we query $\mathcal{A}$ to obtain a $(1+\varepsilon)$-approximate nearest neighbor $q$ of $p$ in $S \backslash S_i$ and set $\mathsf{ann}(p) = q$, which is a $(1+\varepsilon)$-approximate nearest foreign neighbor of $p$ in $S$.
After this, we insert the points in $S_i$ back to $\mathcal{A}$.
In this way, we can obtain $\mathsf{ann}(p)$ for all $p \in S$.
The time cost for each $i \in [r]$ is $O_\varepsilon(|S_i| \log n)$, since both the query time and the update time of $\mathcal{A}$ are $O_\varepsilon(\log n)$.
The overall running time is then $O_\varepsilon(n \log n)$, because $\sum_{i=1}^r |S_i| = n$.
The data structure of Chan et al.~\cite{chan2020locality} works under any $L_p$-norm for $p \geq 1$, so does our algorithm.
\end{proof}

\subparagraph{Grids.}
A \textit{$d$-dimensional grid} refers to a (infinite) set of axis-parallel hyperplanes that partition $\mathbb{R}^d$ into same-sized axis-parallel hypercubes (called \textit{grid cells} or simply \textit{cells}).
A $d$-dimensional grid can be characterized by a number $w > 0$ called the \textit{cell-size} and a vector $(k_1,\dots,k_d) \in \mathbb{R}^d$ called the \textit{offset}.
Specifically, the grid with cell-size $w$ and offset $(k_1,\dots,k_d)$, denoted by $\varGamma_w(k_1,\dots,k_d)$, consists of the hyperplanes whose equations are of the form $x_i = wt + k_i$ for $i \in [d]$ and $t \in \mathbb{Z}$.
In other words, $\varGamma_w(k_1,\dots,k_d)$ is the grid in which the cells are hypercubes of side-length $w$ and $(k_1,\dots,k_d)$ is a grid point (i.e., a vertex of a cell).
Figure~\ref{fig-grid} presents the 2-dimensional grid $\varGamma_3(1,2)$.

\begin{figure}[h]
    \centering
    \includegraphics[height=5cm]{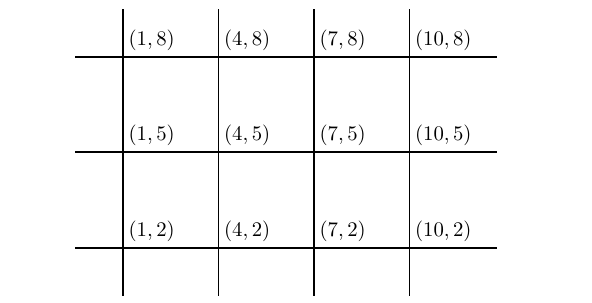}
    \caption{The 2-dimensional grid $\varGamma_3(1,2)$. The numbers are the coordinates of the grid points.}
    \label{fig-grid}
\end{figure}

For a set $S$ of points in $\mathbb{R}^d$, a ($d$-dimensional) grid naturally induces a partition of $S$ in which each part consists of the points in one grid cell.
To guarantee that every point in $\mathbb{R}^d$ belongs to exactly one grid cell, we define the cells as hypercubes that are closed on the lower side and open on the higher side.
Formally, every grid cell of $\varGamma_w(k_1,\dots,k_d)$ is a hypercube $\prod_{i=1}^d [t_iw+k_i, (t_i+1)w+k_i)$ where $(t_1,\dots,t_d) \in \mathbb{Z}^d$.

\section{The approximation scheme} \label{sec-algo}
We present our approximation scheme under the Euclidean norm.
Its extension to any $L_p$-norm is straightforward, and will be briefly discussed in Section~\ref{sec-alltogether}.

Let $S$ be a set of $n$ colored points in $\mathbb{R}^d$.
For each point $p$, let $\mathsf{nn}(p) \in S$ be the nearest foreign neighbor of $p$ in $S$ (with respect to the Euclidean distance).
For simplicity of exposition, we shall first present our algorithm under the assumption that $\mathsf{nn}(p)$ for every $p \in S$ is known to us.
It is unlikely to compute all nearest foreign neighbors in near-linear time for $d \geq 3$, due to the conjectured $\Omega(n^{4/3})$ lower bound by Erickson~\cite{erickson1995relative}.
However, as we only want an approximation algorithm, it turns out that knowing \textit{approximate} nearest foreign neighbors (which can be computed efficiently using Lemma~\ref{lem-anfn}) is already sufficient.
We shall discuss this in Section~\ref{sec-alltogether}.

We define $\phi(p)$ as the distance between $p$ and $\mathsf{nn}(p)$ for all $p \in S$.
The feasible solutions for geometric many-to-many matching on $S$ are subsets $E \subseteq E(S)$ satisfying $V(E) = S$.
It is more convenient to consider an equivalent formulation of the problem where all subsets of $E(S)$ are feasible solutions, which we call the \textit{penalized} formulation.
In this formulation, we allow the solution to not cover all points in $S$, but for every uncovered point $p \in S$, there is a penalty of $\phi(p)$ added to the cost of the solution.
Formally, the cost of a solution $E \subseteq E(S)$ is $\sum_{e \in E} |e| + \sum_{p \in S \backslash V(E)} \phi(p)$, where $|e|$ denotes the \textit{length} of $e$, i.e., the Euclidean distance between the endpoints of $e$.
To see this formulation is equivalent to the original one, let $\mathsf{opt}$ (resp., $\mathsf{opt}'$) be the optimum of the original (resp., penalized) formulation of the problem.
Clearly, a solution of the original formulation is also a solution of the penalized formulation with the same cost, which implies $\mathsf{opt} \geq \mathsf{opt}'$.
The following lemma shows that $\mathsf{opt} \leq \mathsf{opt}'$.

\begin{lemma} \label{lem-penalized}
Given a subset $E \subseteq E(S)$, one can compute in $O(n+|E|)$ time another subset $E' \subseteq E(S)$ such that $V(E') = S$ and $\sum_{e \in E'} |e| \leq \sum_{e \in E} |e| + \sum_{p \in S \backslash V(E)} \phi(p)$.
\end{lemma}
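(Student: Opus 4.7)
The plan is to construct $E'$ from $E$ by simply ``filling in'' edges incident to the uncovered points using their nearest foreign neighbors. Concretely, set $U = S \setminus V(E)$ and define
\[
E' \;=\; E \;\cup\; \bigl\{(p,\mathsf{nn}(p)) : p \in U\bigr\}.
\]
Each added edge has length exactly $\phi(p)$ by the definition of $\mathsf{nn}(p)$ and $\phi(p)$, and each added edge is a valid element of $E(S)$ because $\mathsf{nn}(p)$ is a foreign neighbor of $p$.

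The correctness has two parts. First, $V(E') = S$: every point in $V(E)$ is already covered by an edge of $E$, while every point $p \in U$ is incident to the newly added edge $(p,\mathsf{nn}(p)) \in E'$, so $S = V(E) \cup U \subseteq V(E')$, and the reverse inclusion is trivial since every endpoint lies in $S$. Second, for the cost inequality, each edge added to form $E'$ contributes length $\phi(p)$ for some $p \in U$, and the inequality (rather than equality) accounts for the case where two uncovered points $p, q \in U$ satisfy $\mathsf{nn}(p) = q$ and $\mathsf{nn}(q) = p$, in which case the unordered pair $(p,q)$ is added only once to $E'$ while $\phi(p) + \phi(q)$ appears twice on the right-hand side. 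In all cases,
\[
\sum_{e \in E'} |e| \;\leq\; \sum_{e \in E} |e| \;+\; \sum_{p \in U} \phi(p).
\]

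For the running time, I would maintain a boolean array indexed by the points of $S$ (using the assumed identification of points with $[n]$). Scanning through $E$ once and marking both endpoints of each edge produces $V(E)$ in $O(|E|)$ time; a subsequent linear pass over $S$ extracts $U$ in $O(n)$ time; and forming $E'$ takes $O(n + |E|)$ total time since $|U| \leq n$ and the $\mathsf{nn}(p)$ values are assumed to be given. Thus the whole construction runs in $O(n + |E|)$ time as claimed.

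There is no real obstacle here; the only subtlety is the potential coincidence of added edges, which is absorbed harmlessly into the inequality, and ensuring the $V(E)$ computation runs in $O(n + |E|)$ via direct array marking rather than anything like sorting or hashing.
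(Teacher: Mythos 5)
Your proposal is correct and uses exactly the paper's construction $E' = E \cup \{(p,\mathsf{nn}(p)) : p \in S \setminus V(E)\}$; you simply spell out in more detail why the inequality (rather than equality) arises and how to achieve the $O(n+|E|)$ time bound via array marking.
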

\begin{proof}
We simply set $E' = E \cup \{(p,\mathsf{nn}(p)): p \in S \backslash V(E)\}$, which can be computed in $O(n+|E|)$ time (given the nearest foreign neighbors).
Clearly, $V(E') = S$.
For an edge $e = (p,\mathsf{nn}(p))$, we have $|e| = \phi(p)$.
Thus, $\sum_{e \in E'} |e| \leq \sum_{e \in E} |e| + \sum_{p \in S \backslash V(E)} \phi(p)$.
\end{proof}

As $\mathsf{opt} = \mathsf{opt}'$, a $c$-approximation solution for the original formulation is also a $c$-approximation solution for the penalized formulation.
On the other hand, Lemma~\ref{lem-penalized} shows that given a $c$-approximation solution for the penalized formulation, one can compute in linear time a $c$-approximation solution for the original formulation.
Thus, the two formulations are equivalent, and it suffices to solve the penalized formulation.

We sort the points in $S$ by their coordinates in every dimension, and also by their $\phi$-values.
The benefit of considering the penalized formulation is that it allows us to properly define \textit{subproblems}.
Specifically, for $R \subseteq S$ and $E \subseteq E(R)$, we write 
\begin{equation*}
    \mathsf{cost}_R(E) = \sum_{e \in E} |e| + \sum_{p \in R \backslash V(E)} \phi(p).
\end{equation*}
We define a subproblem $\mathbf{Prob}(R)$ for every $R \subseteq S$, which aims to compute $E \subseteq E(R)$ that minimizes $\mathsf{cost}_R(E)$.
Note that $\mathbf{Prob}(R)$ is not exactly equivalent to the (penalized) geometric many-to-many matching problem on $R$, as the penalty $\phi(p)$ for $p \in R$ is defined by the nearest foreign neighbor of $p$ in $S$ (rather than $R$).
We observe the following simple fact.


\begin{fact} \label{fact-edgelen}
Let $R \subseteq S$ and $E^* \subseteq E(R)$ be an optimal solution of $\mathbf{Prob}(R)$.
Then any edge $e = (p,q) \in E^*$ satisfies $|e| \leq \phi(p)+\phi(q)$.
\end{fact}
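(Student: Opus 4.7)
The plan is a one-line exchange argument. Assume for contradiction that some edge $e=(p,q) \in E^*$ has $|e| > \phi(p) + \phi(q)$, and consider the modified solution $E' = E^* \setminus \{e\} \subseteq E(R)$. I will argue that $\mathsf{cost}_R(E') < \mathsf{cost}_R(E^*)$, contradicting the optimality of $E^*$.

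First I would track the change in the two terms defining $\mathsf{cost}_R$. The edge-length term decreases by exactly $|e|$. For the penalty term, the only points in $R$ whose covering status can change by removing a single edge are its two endpoints $p$ and $q$; every other point is either still covered by a remaining edge in $E'$ (if it was covered by $E^*$) or still uncovered. So at most the penalties $\phi(p)$ and $\phi(q)$ are newly incurred, giving
\begin{equation*}
\mathsf{cost}_R(E') \;\leq\; \mathsf{cost}_R(E^*) - |e| + \phi(p) + \phi(q).
\end{equation*}
Under the assumption $|e| > \phi(p)+\phi(q)$, the right-hand side is strictly less than $\mathsf{cost}_R(E^*)$, contradicting optimality.

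The only subtle point — and essentially the only thing to check — is that the $\phi$-values used in $\mathsf{cost}_R$ are those induced by nearest foreign neighbors in $S$, not in $R$, so they do not depend on the subset $R$ and in particular do not change when we pass from $E^*$ to $E'$. This makes the inequality above valid regardless of whether $\mathsf{nn}(p)$ or $\mathsf{nn}(q)$ actually lies in $R$. There is no real obstacle here; the fact is just the standard observation that an optimal solution never includes an edge whose length exceeds the sum of the penalties it would save.
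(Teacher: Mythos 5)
Your proof is correct and is exactly the paper's argument: remove the offending edge and observe that the penalty term can only increase by at most $\phi(p)+\phi(q)$, yielding $\mathsf{cost}_R(E^*\setminus\{e\}) \leq \mathsf{cost}_R(E^*) - |e| + \phi(p)+\phi(q) < \mathsf{cost}_R(E^*)$. The additional remark about $\phi$ being defined via nearest foreign neighbors in $S$ (not $R$) is a correct clarification but does not change the substance of the argument.
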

\begin{proof}
Assume there exists $e = (p,q) \in E^*$ with $|e| > \phi(p)+\phi(q)$.
Then 
\begin{equation*}
\mathsf{cost}_R(E^* \backslash \{e\}) \leq \mathsf{cost}_R(E^*) - |e| + \phi(p)+\phi(q) < \mathsf{cost}_R(E^*),    
\end{equation*}
contradicting the optimality of $E^*$.
\end{proof}

Clearly, our final goal is to compute a $(1+\varepsilon)$-approximation solution for $\mathbf{Prob}(S)$.
Without loss of generality, assume $\varepsilon \leq 1$.
For $R \subseteq S$, we denote by $\mathsf{opt}(R) = \min_{E \subseteq E(R)} \mathsf{cost}_R(E)$ as the optimum of subproblem $\mathbf{Prob}(R)$.
Our algorithm first applies two reductions, which eventually reduce $\mathbf{Prob}(S)$ to certain well-structured subproblems.
These reductions are presented in Sections~\ref{sec-first} and~\ref{sec-second}.
Then we use grid technique together with the FPT algorithm for integer linear programming to solve these subproblems, which is discussed in Section~\ref{sec-subprob}.
Finally, we put everything together and prove Theorem~\ref{thm-main} in Section~\ref{sec-alltogether}.

\subsection{First reduction} \label{sec-first}

In the first step, we reduce the problem $\mathbf{Prob}(S)$ to subproblems $\mathbf{Prob}(R)$ where the points in $R$ have similar $\phi$-values.
For $R \subseteq S$, let $\Delta_R = \max_{p \in R} \phi(p) / \min_{p \in R} \phi(p)$.
Our goal in this section is to prove the following lemma.

\begin{lemma} \label{lem-first}
Suppose for every subset $R \subseteq S$ satisfying $\Delta_R \leq (\frac{3}{\varepsilon})^{\lceil \frac{3}{\varepsilon} \rceil}$, one can compute in $O_\varepsilon(|R|)$ time a $(1+\frac{\varepsilon}{2})$-approximation solution for $\mathbf{Prob}(R)$.
Then one can compute in $O_\varepsilon(n)$ time a $(1+\varepsilon)$-approximation solution for $\mathbf{Prob}(S)$.
\end{lemma}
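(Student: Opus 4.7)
The plan is a Baker-style shifting argument on geometrically-spaced $\phi$-value levels. Let $a = 3/\varepsilon$ and $k = \lceil 3/\varepsilon \rceil$, and for each integer $i$ define the level $S_i = \{p \in S : \phi(p) \in [a^i, a^{i+1})\}$. For every shift $s \in \{0,1,\dots,k-1\}$, group $k$ consecutive levels into a super-level $S^{(s)}_j = \bigcup_{i = jk+s}^{(j+1)k+s-1} S_i$; each nonempty super-level satisfies $\Delta_{S^{(s)}_j} \leq a^k = (3/\varepsilon)^{\lceil 3/\varepsilon\rceil}$, matching the hypothesis exactly. The algorithm, for each $s$, runs the assumed $(1+\varepsilon/2)$-approximation on each nonempty $S^{(s)}_j$ independently, takes the union $A_s$, and finally returns the $A_s$ of minimum $\mathsf{cost}_S$-value. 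Partitioning by level is read off from the sorted $\phi$-values in $O(n)$ time; each shift then costs $O_\varepsilon(\sum_j |S^{(s)}_j|) = O_\varepsilon(n)$, and there are $k = O_\varepsilon(1)$ shifts, for a total of $O_\varepsilon(n)$.

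For the approximation analysis, fix an optimum $E^*$ of $\mathbf{Prob}(S)$, and for each shift $s$ let $C_s \subseteq E^*$ be the ``cut'' edges whose endpoints lie in different super-levels. The restriction of $E^* \setminus C_s$ to each super-level is feasible for that super-level's subproblem, and since only endpoints of cut edges can become newly uncovered, a routine bookkeeping will give
\[
  \sum_j \mathsf{opt}(S^{(s)}_j) \;\leq\; \mathsf{opt}(S) + \sum_{e=(p,q) \in C_s}\bigl(\phi(p)+\phi(q)-|e|\bigr).
\]
Fact~\ref{fact-edgelen} gives $|e| \leq \phi(p)+\phi(q)$, and the foreign-neighbor inequality $|e| \geq \max(\phi(p),\phi(q))$ (since $p,q$ are foreign neighbors of each other) gives $\phi(p)+\phi(q)-|e| \leq \min(\phi(p),\phi(q))$, which I will use to control the right-hand side.

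The crux is averaging this excess over the $k$ shifts. For an edge $e = (p,q) \in E^*$ with $\phi(p) \leq \phi(q)$ and level gap $m := i_q - i_p$ (where $i_p = \lfloor \log_a \phi(p)\rfloor$), a residue count modulo $k$ shows that $e$ is cut by exactly $\min(k,m)$ shifts. Moreover, $\phi(q) \geq a^{i_q}$ and $\phi(p) < a^{i_p+1}$ give $\min(\phi(p),\phi(q)) = \phi(p) \leq |e|/a^{m-1}$ whenever $m \geq 1$ (with the $m=0$ case contributing nothing). Hence the contribution of $e$ to $\sum_s \sum_{e \in C_s}\min(\phi(p),\phi(q))$ is at most $|e|\cdot\min(k,m)/a^{m-1}$, which I would bound by $|e|$ via the elementary inequality $m/a^{m-1} \leq 1$ for all $m \geq 1$ whenever $a \geq 2$. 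Summing over $E^*$ and dividing by $k$ gives that the best shift has excess at most $\mathsf{opt}(S)/k \leq (\varepsilon/3)\mathsf{opt}(S)$, and composing with the $(1+\varepsilon/2)$-approximation inside each subproblem yields a factor $(1+\varepsilon/2)(1+\varepsilon/3) \leq 1+\varepsilon$ for $\varepsilon \leq 1$.

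The main obstacle I anticipate is making the per-edge averaging bound $\min(k,m)/a^{m-1} \leq 1$ come out tight: a naive ``every shift cuts every edge'' estimate only yields a $2$-approximation. The saving uses \emph{both} sides of the $|e|$ sandwich---Fact~\ref{fact-edgelen} on top and the foreign-neighbor inequality on the bottom---to argue that edges spanning many levels necessarily have an exponentially small $\min(\phi(p),\phi(q))$ relative to $|e|$, and the coupled choice $a = k = \Theta(1/\varepsilon)$ is what makes $a$ large enough for this exponential decay to dominate the linear factor $m$ while keeping the number of shifts small enough to lose only a $(1+\varepsilon/3)$ factor in the shift average.
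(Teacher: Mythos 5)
Your proof is correct, and while the decomposition and algorithm are identical to the paper's, the analysis of the shift-averaged excess takes a genuinely different route. The super-levels $S_j^{(s)}$ you build over levels $S_i = \{p : \phi(p) \in [a^i, a^{i+1})\}$ with $a = 3/\varepsilon$ and window $k = \lceil 3/\varepsilon\rceil$ are exactly the paper's partitions $\mathcal{R}_i$ induced by the $1$-dimensional grids $\varGamma_w(i)$ on the $\phi'$-values, just relabeled; the algorithm (try all $k$ shifts, solve subproblems, return the best union) is the same. Where you differ is in bounding $\sum_j \mathsf{opt}(S_j^{(s)}) - \mathsf{opt}(S)$. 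The paper introduces a hard dichotomy: \emph{balanced} edges (level gap essentially $\le 1$) are cut by at most one of the $w$ shifts and contribute excess at most $|e|$ per cut (Observation~\ref{obs-largeIe}), while \emph{unbalanced} edges contribute excess at most $(\varepsilon/3)|e|$ regardless of the shift (Observation~\ref{obs-unba}); the two cases are then summed separately. You replace this with a single per-edge averaging bound: an edge with level gap $m$ is cut by exactly $\min(k,m)$ shifts, each cut costs at most $\min(\phi(p),\phi(q)) \le |e|/a^{m-1}$, and the product $\min(k,m)\cdot a^{-(m-1)}$ is at most $1$ for $a\ge 2$. This unifies the paper's two cases into a smooth trade-off ($a^{-(m-1)}$ is the quantitative version of ``unbalanced edges are cheap,'' and $\min(k,m)$ of ``balanced edges are rarely cut''), which is arguably cleaner, whereas the paper's dichotomy makes each case individually trivial to verify. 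The surrounding ingredients are the same in both: Fact~\ref{fact-edgelen} gives $|e| \le \phi(p)+\phi(q)$, the foreign-neighbor inequality gives $|e|\ge\max\{\phi(p),\phi(q)\}$ and hence the excess bound $\min(\phi(p),\phi(q))$, the bookkeeping identity $\sum_j \mathsf{opt}(S_j^{(s)}) \le \mathsf{opt}(S) + \sum_{e\in C_s}(\phi(p)+\phi(q)-|e|)$ holds because only endpoints of cut edges can become newly uncovered, and the composition $(1+\varepsilon/2)(1+\varepsilon/3) \le 1+\varepsilon$ for $\varepsilon \le 1$ and the $O_\varepsilon(n)$ running time from $k = O_\varepsilon(1)$ shifts close the argument exactly as in the paper.
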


The basic idea of this reduction is the following.
We distinguish the edges in $E(S)$ as balanced edges and unbalanced edges.
The balanced edges are those whose two endpoints have similar $\phi$-values.
It turns out that the unbalanced edges can be ``ignored'' almost for free.
Regarding only the balanced edges, we can then apply Baker's shifting technique~\cite{baker1994approximation} on the $\phi$-values to decompose the problem.
Below we discuss the reduction in detail.

For each point $p \in S$, we write $\phi'(p) = \log_{3/\varepsilon} \phi(p)$.
We say an edge $e = (p,q) \in E(S)$ is \textit{balanced} if $|\phi'(p) - \phi'(q)| \leq 1$, and \textit{unbalanced} otherwise.

\begin{observation} \label{obs-unba}
For any unbalanced edge $e = (p,q) \in E(S)$, $\phi(p)+\phi(q) \leq (1+\frac{\varepsilon}{3}) \cdot |e|$.
\end{observation}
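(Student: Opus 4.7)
The plan is to use the definition of $\phi$ together with the unbalanced condition to show that one of the two $\phi$-values is negligible compared to $|e|$, while the other is bounded by $|e|$.

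First I would set up notation by assuming without loss of generality that $\phi(p) \leq \phi(q)$, so that the unbalanced condition $|\phi'(p) - \phi'(q)| > 1$ becomes $\phi'(q) - \phi'(p) > 1$, which unwinding the logarithm yields $\phi(q) > (3/\varepsilon) \cdot \phi(p)$, or equivalently $\phi(p) < (\varepsilon/3) \cdot \phi(q)$.

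Next, I would observe that since $(p,q) \in E(S)$ we have $\mathsf{cl}(p) \neq \mathsf{cl}(q)$, so $p$ is a foreign neighbor of $q$ in $S$. By definition of $\mathsf{nn}(q)$ as the \emph{nearest} foreign neighbor, this immediately gives $\phi(q) = \mathsf{dist}(q, \mathsf{nn}(q)) \leq \mathsf{dist}(p,q) = |e|$. Combining with the bound from the previous step, $\phi(p) < (\varepsilon/3) \cdot |e|$.

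Adding the two bounds yields
\begin{equation*}
\phi(p) + \phi(q) \leq \tfrac{\varepsilon}{3} \cdot |e| + |e| = \left(1 + \tfrac{\varepsilon}{3}\right) \cdot |e|,
\end{equation*}
which is exactly the desired inequality. There is no real obstacle here: the observation is essentially a direct computation from the definitions of ``balanced'' and $\phi$, with the only subtlety being that one must use both the lower bound on the ratio $\phi(q)/\phi(p)$ given by the unbalanced condition and the upper bound $\phi(q) \leq |e|$ coming from the fact that the edge's endpoints are themselves mutually foreign neighbors.
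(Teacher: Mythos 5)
Your proof is correct and follows essentially the same approach as the paper: unwind the unbalanced condition to get $\phi(p) < \frac{\varepsilon}{3} \cdot \phi(q)$ (with the labels swapped relative to the paper's WLOG choice), bound the larger $\phi$-value by $|e|$ using the nearest-foreign-neighbor definition, and add. The only cosmetic difference is that the paper factors the larger $\phi$-value out first, writing $\phi(p)+\phi(q) \leq (1+\frac{\varepsilon}{3})\max\{\phi(p),\phi(q)\} \leq (1+\frac{\varepsilon}{3})|e|$, whereas you bound the two terms separately and sum; the content is identical.
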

\begin{proof}
Without loss of generality, assume $\phi(p) \geq \phi(q)$.
As $e$ is unbalanced, $|\phi'(p) - \phi'(q)| > 1$ and thus $\phi(p) > \frac{3}{\varepsilon} \cdot \phi(q)$.
Note that $|e| \geq \phi(p)$ by the definition of $\phi(p)$.
Therefore, we have $\phi(p)+\phi(q) \leq (1+\frac{\varepsilon}{3}) \cdot \phi(p) \leq (1+\frac{\varepsilon}{3}) \cdot |e|$.
\end{proof}

Set $w = \lceil \frac{3}{\varepsilon} \rceil$.
For each $i \in [w]$, we construct the (1-dimensional) grid $\varGamma_w(i)$ --- recall that $\varGamma_w(i)$ is the grid with cell-size $w$ and offset $i$.
For an edge $e = (p,q) \in E(S)$, we define $I_e = \{i \in [w]: \phi'(p) \text{ and } \phi'(q) \text{ lie in the same cell of } \varGamma_w(i)\}$.

\begin{observation} \label{obs-largeIe}
For any balanced edge $e \in E(S)$, $|I_e| \geq w-1$.
\end{observation}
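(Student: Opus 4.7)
The plan is to reduce the claim to a counting argument about how many integers can lie in a short half-open interval. First, I would unpack the definition of ``same cell'': by the half-open convention adopted earlier in the paper, the cells of $\varGamma_w(i)$ are of the form $[tw+i,(t+1)w+i)$ for $t\in\mathbb{Z}$. Assuming WLOG that $\phi'(p)\le\phi'(q)$, the two values lie in different cells of $\varGamma_w(i)$ if and only if some grid point $tw+i$ falls in the half-open interval $J=(\phi'(p),\phi'(q)]$.

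Next, I would observe that the grid points across all $w$ shifts exactly cover the integers, with each integer appearing as a grid point of exactly one $\varGamma_w(i)$. Indeed, as $t$ ranges over $\mathbb{Z}$ and $i$ over $[w]=\{1,\dots,w\}$, the values $tw+i$ enumerate $\mathbb{Z}$ once: the residue of an integer modulo $w$ (with representatives $1,\dots,w$) uniquely determines the shift $i$ to which it belongs. So the ``bad'' indices $i\in[w]\setminus I_e$ are in bijection with the integers contained in $J$.

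The balanced condition then kicks in: since $e$ is balanced, $|\phi'(p)-\phi'(q)|\le 1$, so the half-open interval $J$ has length at most $1$, and such an interval contains at most one integer. Combining with the previous step, $|[w]\setminus I_e|\le 1$, hence $|I_e|\ge w-1$, which is exactly the claim.

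The argument is short and there is no serious obstacle; the only subtle point is keeping the half-open conventions consistent when declaring when two points ``lie in the same cell,'' which is why I would state the equivalence with the integer-counting formulation explicitly before invoking the length-one bound. The fact that $[w]$ is taken as $\{1,\dots,w\}$ rather than $\{0,\dots,w-1\}$ does not affect the bijection with residues modulo $w$, but it is worth a sentence to justify it.
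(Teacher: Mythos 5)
Your proof is correct and follows essentially the same approach as the paper: reduce to counting integers in an interval of length at most $1$, and use the fact that the shifted grids $\varGamma_w(i)$ for $i\in[w]$ partition $\mathbb{Z}$ by residue class modulo $w$. You are in fact slightly more careful than the paper about the half-open convention --- with cells $[tw+i,(t+1)w+i)$ and $\phi'(p)\le\phi'(q)$, the separating grid points are indeed those in $(\phi'(p),\phi'(q)]$, whereas the paper writes $[\phi'(p),\phi'(q))$; this is a harmless slip since either interval has length at most $1$ and hence contains at most one integer, which is all the argument needs.
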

\begin{proof}
Suppose $e = (p,q)$.
Since $e$ is balanced, $|\phi'(p) - \phi'(q)| \leq 1$ and thus there exists at most one integer $i^* \in [\phi'(p),\phi'(q))$.
For any $i \in [w]$, $i \notin I_e$ iff $i^*$ exists and $i$ is congruent with $i^*$ modulo $w$.
Thus, $|I_e| = w$ if $i^*$ does not exist and $|I_e| = w-1$ if $i^*$ exists.
\end{proof}

For each $i \in [w]$, the grid $\varGamma_w(i)$ induces a partition of the values in $\{\phi'(p): p \in S\}$, which in turn induces a partition 
$\mathcal{R}_i$ of $S$.
In other words, $\mathcal{R}_i$ partitions $S$ into subsets each of which contains the points in $S$ whose $\phi'$-values lying in one cell of $\varGamma_w(i)$.

\begin{observation} \label{obs-onegoodidx}
    There exists $i \in [w]$ such that $\sum_{R \in \mathcal{R}_i} \mathsf{opt}(R) \leq (1+\frac{\varepsilon}{3}) \cdot \mathsf{opt}(S)$.
\end{observation}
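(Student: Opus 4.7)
The plan is to run an averaging argument over the $w$ shifts. Starting from an optimal solution $E^* \subseteq E(S)$ of $\mathbf{Prob}(S)$, I would exhibit, for every $i \in [w]$, candidate solutions $\{E_R\}_{R \in \mathcal{R}_i}$ with $E_R \subseteq E(R)$, and show that $\sum_{i=1}^w \sum_{R \in \mathcal{R}_i} \mathsf{cost}_R(E_R) \leq w(1+\frac{\varepsilon}{3})\cdot\mathsf{opt}(S)$. Since $\mathsf{opt}(R) \leq \mathsf{cost}_R(E_R)$, averaging over $i$ then yields at least one shift meeting the desired inequality.

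First I would split $E^* = E^*_{\mathrm{bal}} \cup E^*_{\mathrm{unb}}$ into balanced and unbalanced edges, and argue that the balanced part alone is already nearly optimal as a penalized solution on $S$. Indeed, every unbalanced edge $e = (p,q) \in E^*_{\mathrm{unb}}$ satisfies $\phi(p)+\phi(q) \leq (1+\frac{\varepsilon}{3})|e|$ by Observation~\ref{obs-unba}, so replacing $E^*$ with $E^*_{\mathrm{bal}}$ and paying the penalty $\phi(p)+\phi(q)$ for each dropped unbalanced edge gives $\mathsf{cost}_S(E^*_{\mathrm{bal}}) \leq (1+\frac{\varepsilon}{3})\cdot\mathsf{opt}(S)$.

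Next, for each $i \in [w]$ and $R \in \mathcal{R}_i$, I would set $E_R := \{e \in E^*_{\mathrm{bal}} : e \subseteq R\}$. An edge $e \in E^*_{\mathrm{bal}}$ appears in some $E_R$ under shift $i$ exactly when $i \in I_e$, so the total edge contribution summed over $i$ is $\sum_{e \in E^*_{\mathrm{bal}}} |I_e|\cdot|e| \leq w\sum_{e \in E^*_{\mathrm{bal}}} |e|$. Writing $J_p := \bigcup_{e \in E^*_{\mathrm{bal}},\, p \in e} I_e$, a point $p$ is uncovered by $E_R$ (with $R \ni p$ in $\mathcal{R}_i$) precisely when $i \notin J_p$, so the penalty contribution summed over $i$ equals $\sum_p (w - |J_p|)\phi(p)$. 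By Observation~\ref{obs-largeIe}, $|J_p| \geq w-1$ for every $p \in V(E^*_{\mathrm{bal}})$ and $|J_p|=0$ otherwise, so the penalty splits as $w\sum_{p \notin V(E^*_{\mathrm{bal}})} \phi(p) + (\text{leakage})$, where the leakage is at most $\sum_{p \in V(E^*_{\mathrm{bal}})} \phi(p)$.

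The main obstacle is bounding this leakage: it is not directly absorbed into $w\cdot\mathsf{cost}_S(E^*_{\mathrm{bal}})$. I would charge each $p \in V(E^*_{\mathrm{bal}})$ to one incident balanced edge $e_p$, use $\phi(p) \leq |e_p|$ (by definition of $\mathsf{nn}(p)$), and observe that every edge absorbs at most two such charges, giving $\sum_{p \in V(E^*_{\mathrm{bal}})}\phi(p) \leq 2\sum_{e \in E^*_{\mathrm{bal}}} |e| \leq 2\mathsf{opt}(S)$. Assembling everything yields
\begin{equation*}
\sum_{i=1}^w \sum_{R \in \mathcal{R}_i} \mathsf{cost}_R(E_R) \;\leq\; w\cdot\mathsf{cost}_S(E^*_{\mathrm{bal}}) \;+\; 2\sum_{e \in E^*_{\mathrm{bal}}}|e| \;\leq\; \bigl[w(1+\tfrac{\varepsilon}{3}) + 2\bigr]\cdot\mathsf{opt}(S),
\end{equation*}
and dividing by $w \geq 3/\varepsilon$ produces an $i$ with $\sum_R \mathsf{opt}(R) \leq (1 + \frac{\varepsilon}{3} + \frac{2}{w})\mathsf{opt}(S)$. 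The residual $2/w$ slack is the price of the crude leakage bound; closing it to the stated $(1+\frac{\varepsilon}{3})$ factor is the main technical wrinkle, and would be handled either by rescaling $\varepsilon$ by an absolute constant at the outset, or by sharpening the leakage estimate by noting that a point $p$ incident to two balanced edges with distinct ``missing'' indices has $|J_p|=w$ and contributes no leakage at all.
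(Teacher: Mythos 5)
Your overall strategy — Baker-style shifting over the $\phi'$-values with an averaging argument — is the same one the paper uses, and the bookkeeping you set up (the split $E^*=E^*_{\mathrm{bal}}\cup E^*_{\mathrm{unb}}$, the sets $I_e$, the sets $J_p$) is sound. But the proof as written only establishes $\sum_{R\in\mathcal{R}_i}\mathsf{opt}(R)\leq(1+\frac{\varepsilon}{3}+\frac{2}{w})\cdot\mathsf{opt}(S)$, and since $w=\lceil 3/\varepsilon\rceil$ the residual $\frac{2}{w}=\Theta(\varepsilon)$ is not a negligible lower-order term: you have proved roughly $(1+\varepsilon)\cdot\mathsf{opt}(S)$, not the claimed $(1+\frac{\varepsilon}{3})\cdot\mathsf{opt}(S)$. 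Neither of your two suggested patches, as stated, closes the gap. The rescaling fix is legitimate in spirit but is not ``at the outset'': the constant $\frac{\varepsilon}{3}$ is consumed downstream in the proof of Lemma~\ref{lem-first}, where $(1+\frac{\varepsilon}{2})(1+\frac{\varepsilon}{3})\leq 1+\varepsilon$ is invoked, so replacing $\frac{\varepsilon}{3}$ by $\Theta(\varepsilon)$ would force you to re-derive the whole chain of intermediate constants in Lemmas~\ref{lem-first}--\ref{lem-restricted}. And the proposed sharpening (a point incident to two balanced edges with distinct missing indices has $|J_p|=w$) gives nothing in the worst case: if $E^*_{\mathrm{bal}}$ is a perfect matching, every covered point sees exactly one balanced edge and hence at most one missing index, so no leakage is eliminated.

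The actual repair, which is what the paper's proof does, is to localize the leakage per shift instead of bounding it globally by $\sum_{p\in V(E^*_{\mathrm{bal}})}\phi(p)$. Fix a shift $i$ and let $N^*_i:=\{e\in E^*_{\mathrm{bal}}: i\notin I_e\}$ be the balanced edges it cuts. A point $p\in V(E^*_{\mathrm{bal}})$ loses coverage at shift $i$ only if \emph{every} balanced edge incident to $p$ is cut, hence $\{p\in V(E^*_{\mathrm{bal}}): i\notin J_p\}\subseteq V(N^*_i)$, so
\begin{equation*}
\sum_{\substack{p\in V(E^*_{\mathrm{bal}})\\ i\notin J_p}}\phi(p) \;\leq\; \sum_{p\in V(N^*_i)}\phi(p)\;\leq\; 2\sum_{e\in N^*_i}|e|,
\end{equation*}
using $\phi(p)+\phi(q)\leq 2|e|$ for $e=(p,q)$. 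Netting this against the edge-length savings $-\sum_{e\in N^*_i}|e|$ that you also gain at shift $i$ (the cut edges are not charged as edges), the per-shift cost is at most $\mathsf{cost}_S(E^*_{\mathrm{bal}})+\sum_{e\in N^*_i}|e|$. Since each balanced edge lies in at most one $N^*_i$ by Observation~\ref{obs-largeIe}, summing and averaging gives an extra $\frac{1}{w}\sum_{e\in E^*_{\mathrm{bal}}}|e|\leq\frac{\varepsilon}{3}\sum_{e\in E^*_{\mathrm{bal}}}|e|$, and your first step, applied tightly as $\mathsf{cost}_S(E^*_{\mathrm{bal}})\leq\mathsf{cost}_S(E^*)+\frac{\varepsilon}{3}\sum_{e\in E^*_{\mathrm{unb}}}|e|$, contributes the complementary $\frac{\varepsilon}{3}\sum_{e\in E^*_{\mathrm{unb}}}|e|$. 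The two add up to $\frac{\varepsilon}{3}\sum_{e\in E^*}|e|\leq\frac{\varepsilon}{3}\mathsf{opt}(S)$, exactly the claimed slack. The paper arrives at the same place but chooses the best $i$ first by averaging only $\sum_{e\in N^*_i}|e|$, then bounds $\sum_{R\in\mathcal{R}_i}\mathsf{opt}(R)$ directly against $\mathsf{cost}_S(E^*)$; the two routes are equivalent once the per-shift leakage is charged to $V(N^*_i)$ rather than to all of $V(E^*_{\mathrm{bal}})$.
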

\begin{proof}
Let $E^* \subseteq E(S)$ be an optimal solution of $\mathbf{Prob}(S)$.
Also, let $B^* \subseteq E^*$ and $U^* \subseteq E^*$ be the subsets consisting of balanced and unbalanced edges in $E^*$, respectively.
For $i \in [w]$, define $N_i^* = \{e \in B^*: i \notin I_e\}$.
By Observation~\ref{obs-largeIe}, each balanced edge $e \in B^*$ is contained in at most one set $N_i^*$.
Thus, we have $\sum_{i=1}^w \sum_{e \in N_i^*} |e| \leq \sum_{e \in B^*} |e|$, which implies the existence of an index $i \in [w]$ such that $\sum_{e \in N_i^*} |e| \leq (\sum_{e \in B^*} |e|)/w \leq \frac{\varepsilon}{3} \cdot (\sum_{e \in B^*} |e|)$.
We show that $\sum_{R \in \mathcal{R}_i} \mathsf{opt}(R) \leq (1+\frac{\varepsilon}{3}) \cdot \mathsf{opt}(S)$.
Note that $N_i^*$ contains exactly the balanced edges in $E^*$ whose two endpoints belong to different sets in $\mathcal{R}_i$.
Therefore, we have
\begin{equation*}
    \sum_{R \in \mathcal{R}_i}\mathsf{opt}(R) \leq \sum_{R \in \mathcal{R}_i} \mathsf{cost}_R(B^* \cap E(R)) \leq \mathsf{cost}_S(E^*) - \sum_{e \in U^* \cup N_i^*} |e| + \sum_{p \in V(U^* \cup N_i^*)} \phi(p).
\end{equation*}
By Observation~\ref{obs-unba}, $\sum_{p \in V(U^*)} \phi(p) -\sum_{e \in U^*} |e| \leq \frac{\varepsilon}{3} \sum_{e \in U^*} |e|$.
Furthermore, every $e = (p,q) \in E(S)$ satisfies $|e| \geq \max\{\phi(p),\phi(q)\} \geq \frac{1}{2} (\phi(p)+\phi(q))$ by the definition of $\phi$.
Therefore, $\sum_{e \in N_i^*} |e| \geq \frac{1}{2} \sum_{p \in V(N_i^*)} \phi(p)$.
It follows that $\sum_{p \in V(N_i^*)} \phi(p) -\sum_{e \in N_i^*} |e| \leq \sum_{e \in N_i^*} |e| \leq \frac{\varepsilon}{3} \sum_{e \in B^*} |e|$, where the second inequality follows from the choice of $i$.
Now,
\begin{align*}
    \sum_{R \in \mathcal{R}_i}\mathsf{opt}(R)\ \leq\ \ & \mathsf{cost}_S(E^*) - \sum_{e \in U^* \cup N_i^*} |e| + \sum_{p \in V(U^* \cup N_i^*)} \phi(p) \\
    \ \leq\ \  & \mathsf{cost}_S(E^*) + \left(\sum_{p \in V(U^*)} \phi(p) -\sum_{e \in U^*} |e|\right) + \left(\sum_{p \in V(N_i^*)} \phi(p) -\sum_{e \in N_i^*} |e|\right) \\
    \ \leq\ \  & \mathsf{cost}_S(E^*) + \frac{\varepsilon}{3} \sum_{e \in U^*} |e| + \frac{\varepsilon}{3} \sum_{e \in B^*} |e| \\
    \ \leq\ \ & \left(1+\frac{\varepsilon}{3}\right) \cdot \mathsf{cost}_S(E^*),
\end{align*}
where the last inequality follows from the fact $\sum_{e \in U^*} |e| + \sum_{e \in B^*} |e| = \sum_{e \in E^*} |e| \leq \mathsf{cost}_S(E^*)$.
As $\mathsf{cost}_S(E^*) = \mathsf{opt}(S)$, we conclude that $\sum_{R \in \mathcal{R}_i} \mathsf{opt}(R) \leq (1+\frac{\varepsilon}{3}) \cdot \mathsf{opt}(S)$.
\end{proof}

Using the above observation, we can now prove Lemma~\ref{lem-first}.
We consider every $i \in [w]$ and compute the partition $\mathcal{R}_i$ of $S$.
Note that $\mathcal{R}_i$ can be computed in $O(n)$ time as we sorted the points in $S$ by their $\phi$-values.
For every $R \in \mathcal{R}_i$, the $\phi'$-values of the points in $R$ differ by at most $w$ and thus $\Delta_R \leq (\frac{3}{\varepsilon})^w = (\frac{3}{\varepsilon})^{\lceil \frac{3}{\varepsilon} \rceil}$.
Therefore, by our assumption, for every $R \in \mathcal{R}_i$, we can compute in $O_\varepsilon(|R|)$ time a $(1+\frac{\varepsilon}{2})$-approximation solution $E_R^* \subseteq E(R)$ for $\mathbf{Prob}(R)$.
The union $E_i^* = \bigcup_{R \in \mathcal{R}_i} E_R^*$ is a solution of $\mathbf{Prob}(S)$ and $\mathsf{cost}_S(E_i^*) = \sum_{R \in \mathcal{R}_i} \mathsf{cost}_R(E_R^*) \leq (1+\frac{\varepsilon}{2}) \sum_{R \in \mathcal{R}_i} \mathsf{opt}(R)$.
The total time for constructing $E_i^*$ is $O_\varepsilon(n)$, because $\sum_{R \in \mathcal{R}_i} |R| = n$.
We construct the solution $E_i^*$ for all $i \in [w]$ and finally output the best one among them.
Observation~\ref{obs-onegoodidx} guarantees the existence of $i \in [w]$ such that
\begin{equation*}
    \mathsf{cost}_S(E_i^*) \leq \left(1+\frac{\varepsilon}{2}\right) \sum_{R \in \mathcal{R}_i} \mathsf{opt}(E_R^*) \leq \left(1+\frac{\varepsilon}{2}\right) \left(1+\frac{\varepsilon}{3}\right) \cdot \mathsf{opt}(S) \leq (1+\varepsilon) \cdot \mathsf{opt}(S).
\end{equation*}
Therefore, our algorithm gives a $(1+\varepsilon)$-approximation solution for $\mathsf{opt}(S)$.
Since $w = O_\varepsilon(1)$, the total running time is still $O_\varepsilon(n)$.
This completes the proof of Lemma~\ref{lem-first}.


\subsection{Second reduction} \label{sec-second}

In this section, we further reduce a subproblem $\mathbf{Prob}(R)$ with bounded $\Delta_R$ to subproblems $\mathbf{Prob}(Q)$ where $Q$ has a small bounding box compared to the values $\phi(p)$ for $p \in Q$.
For $Q \subseteq S$, let $W_Q$ be the side-length of the smallest axis-parallel hypercube containing $Q$.
Our goal in this section is to prove the following lemma.

\begin{lemma} \label{lem-second}
Suppose for every subset $Q \subseteq S$ satisfying $W_Q \leq 2 \lceil \frac{4d}{\varepsilon} \rceil (\frac{3}{\varepsilon})^{\lceil \frac{3}{\varepsilon} \rceil} \cdot \min_{p \in Q} \phi(p)$, one can compute in $O_\varepsilon(|Q|)$ time a $(1+\frac{\varepsilon}{5})$-approximation solution for $\mathbf{Prob}(Q)$.
Then for every subset $R \subseteq S$ satisfying $\Delta_R \leq (\frac{3}{\varepsilon})^{\lceil \frac{3}{\varepsilon} \rceil}$, one can compute in $O_\varepsilon(|R|)$ time a $(1+\frac{\varepsilon}{2})$-approximation solution for $\mathbf{Prob}(R)$.
\end{lemma}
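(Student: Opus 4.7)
The plan is to apply a $d$-dimensional Baker-style shifting-grid argument to $R$: cover $R$ with a shifted grid, chop $R$ into pieces that each fit inside a single cell, invoke the assumed $(1+\frac{\varepsilon}{5})$-approximation on each piece, and average over shifts to show that the losses at cut edges are small. Let $\phi_{\min}=\min_{p\in R}\phi(p)$, $C=(3/\varepsilon)^{\lceil 3/\varepsilon\rceil}$, $k=\lceil 4d/\varepsilon\rceil$, and $L=2kC\phi_{\min}$. The crucial numerical consequence of the hypothesis $\Delta_R\leq C$ combined with Fact~\ref{fact-edgelen} is that every edge $e=(p,q)$ of any optimal solution $E^*$ of $\mathbf{Prob}(R)$ satisfies $|e|\leq\phi(p)+\phi(q)\leq 2C\phi_{\min}=L/k$, so every optimal edge is short compared to the cell-size.

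The algorithm enumerates the $k^d=O_\varepsilon(1)$ shift tuples $(s_1,\ldots,s_d)\in\{0,\frac{L}{k},\ldots,\frac{(k-1)L}{k}\}^d$. For each, it builds $\varGamma_L(s_1,\ldots,s_d)$, partitions $R$ into one subset $Q$ per non-empty cell (done in $O_\varepsilon(|R|)$ time by standard bucketing once each point is tagged with its $d$-tuple of cell indices), invokes the assumed algorithm on each $\mathbf{Prob}(Q)$---legal because each cell is a hypercube of side $L$, giving $W_Q\leq L\leq 2kC\min_{p\in Q}\phi(p)$---and returns $\hat{E}=\bigcup_Q \hat{E}_Q$. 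Finally it outputs the best $\hat{E}$ over the $k^d$ tuples. Summed over tuples and cells, the total running time is $O_\varepsilon(|R|)$.

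For the approximation analysis, fix an optimal $E^*$ and run an averaging argument. For each $e=(p,q)\in E^*$ and each axis $j$, the projection satisfies $|p_j-q_j|\leq|e|\leq L/k$, so the open interval strictly between $p_j$ and $q_j$ contains at most one hyperplane from the union over all $k$ shifts (which together form a finer grid of spacing $L/k$ in dimension $j$); equivalently, at most one shift $s_j$ cuts $e$ across dimension $j$. A union bound over the $d$ dimensions then says the fraction of the $k^d$ tuples under which $e$ is cut is at most $d/k\leq\varepsilon/4$, so averaging produces a tuple whose total cut length $\sum_{e\in\mathrm{cut}}|e|$ is at most $(\varepsilon/4)\,\mathsf{opt}(R)$. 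For that grid, writing $E_Q^*=E^*\cap E(Q)$, deleting the cut edges from $E^*$ saves $\sum_{\mathrm{cut}}|e|$ in edge cost but may leave some points uncovered; every such newly uncovered point $p$ is incident to some cut edge $e=(p,q)$ and satisfies $\phi(p)\leq|e|$ because $\mathsf{nn}(p)$ is the closest foreign neighbor of $p$, so the new total penalty is at most $\sum_{e=(p,q)\in\mathrm{cut}}(\phi(p)+\phi(q))\leq 2\sum_{\mathrm{cut}}|e|$. Hence $\sum_Q\mathsf{opt}(Q)\leq\sum_Q\mathsf{cost}_Q(E_Q^*)\leq(1+\tfrac{\varepsilon}{4})\mathsf{opt}(R)$, and combining with the per-cell $(1+\tfrac{\varepsilon}{5})$-approximation yields $\mathsf{cost}_R(\hat{E})\leq(1+\tfrac{\varepsilon}{5})(1+\tfrac{\varepsilon}{4})\mathsf{opt}(R)\leq(1+\tfrac{\varepsilon}{2})\mathsf{opt}(R)$ for $\varepsilon\leq 1$.

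The main subtlety I expect to wrestle with is the cost bookkeeping after cut edges are deleted: one must simultaneously control the lost edge length and the penalties of the points that become uncovered, and the clean $O(\varepsilon)$ overhead rests on the elementary inequality $\phi(p)\leq|e|$ for every $e\in E(S)$ incident to $p$, which charges each new penalty against an already-paid cut edge. Once that charge is in place, the rest---tuple enumeration, bucket-style cell assignment, and multiplicative composition of the two error factors---is routine.
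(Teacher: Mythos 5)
Your proposal is correct and follows essentially the same route as the paper: a $d$-dimensional shifted grid of cell-size proportional to $\phi$-scale with $\lceil 4d/\varepsilon\rceil$ shifts per axis, the short-edge bound from Fact~\ref{fact-edgelen} guaranteeing at most one cutting shift per axis, an averaging argument to select a good shift tuple with cut length at most $\tfrac{\varepsilon}{4}\mathsf{opt}(R)$, and the charge $\phi(p)\leq|e|$ to account for newly uncovered points. The only cosmetic difference is that you size cells with $L=2kC\phi_{\min}$ while the paper uses $w=2r\max_{p\in R}\phi(p)$ (which is at most your $L$ when $\Delta_R\leq C$); both satisfy the hypothesis of Lemma~\ref{lem-restricted} and yield the same error composition.
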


This reduction is done again by a shifting technique.
But this time, we apply grid shifting to the space $\mathbb{R}^d$.
As the points in $R$ have similar $\phi$-values, the edges in an optimal solution of $\mathbf{Prob}(R)$ also have similar lengths by Fact~\ref{fact-edgelen}.
This nice property allows us to use a shifted grid to decompose the problem by losing a factor of $1+O(\varepsilon)$ in cost.

Consider a subset $R \subseteq S$ satisfying $\Delta_R \leq (\frac{3}{\varepsilon})^{\lceil \frac{3}{\varepsilon} \rceil}$.
Let $E^* \subseteq E(R)$ be an optimal solution of $\mathbf{Prob}(R)$.
Set $r = \lceil \frac{4d}{\varepsilon} \rceil$, $\phi = \max_{p \in R} \phi(p)$, and $w = r \cdot 2\phi$.
We say an edge $e \in E^*$ is \textit{compatible} with a tuple $(k_1,\dots,k_d) \in [r]^d$ if the two endpoints of $e$ lie in the same cell of the $d$-dimensional grid $\varGamma_w(k_1 \cdot 2\phi,\dots,k_d \cdot 2\phi)$.
\begin{observation} \label{obs-manycomp}
Every $e \in E^*$ is compatible with at least $(r-1)^d$ tuples in $[r]^d$.
\end{observation}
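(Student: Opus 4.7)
The plan is to reduce the claim to an independent per-dimension count: in each of the $d$ dimensions, at most one offset $k_i \in [r]$ can be ``bad'' (i.e., cause a grid hyperplane to separate the projections of the two endpoints of $e$), so at least $r-1$ offsets per dimension are ``good,'' and multiplying across dimensions yields $(r-1)^d$ compatible tuples.

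First I would apply Fact~\ref{fact-edgelen} together with $\phi(p),\phi(q) \leq \phi$ to conclude that every edge $e=(p,q) \in E^*$ has length at most $2\phi$. In particular, for each dimension $i$, the projections satisfy $|p_i - q_i| \leq 2\phi$. Next I would unpack the grid $\varGamma_w(k_1 \cdot 2\phi, \dots, k_d \cdot 2\phi)$: in dimension $i$ its hyperplanes sit at positions $(rt + k_i) \cdot 2\phi$ for $t \in \mathbb{Z}$, so the endpoints $p,q$ lie in the same cell with respect to this grid if and only if, for every dimension $i$, no such hyperplane strictly separates $p_i$ and $q_i$ (using the half-open cell convention).

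The heart of the argument is then the following pigeonhole in a single dimension $i$. The offsets $k_i \in [r]$ together contribute hyperplanes exactly at all integer multiples of $2\phi$, because $(rt+k_i)$ ranges over all of $\mathbb{Z}$ as $(t,k_i) \in \mathbb{Z}\times[r]$. Since $|p_i-q_i|\leq 2\phi$, the half-open interval determined by $(\min(p_i,q_i),\max(p_i,q_i)]$ has length at most $2\phi$ and hence contains at most one integer multiple $l \cdot 2\phi$. This integer $l$ uniquely determines the residue class modulo $r$, and hence a unique $k_i \in [r]$, that could possibly separate $p_i$ from $q_i$. Consequently at most one $k_i \in [r]$ is bad for dimension $i$, and at least $r-1$ are good.

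Finally, a tuple $(k_1,\dots,k_d)$ is compatible with $e$ exactly when every coordinate $k_i$ is good for dimension $i$, since badness/goodness in distinct dimensions are independent. Multiplying the per-dimension counts gives at least $(r-1)^d$ compatible tuples, as claimed. The only delicate point, and the main thing one has to get right, is tracking the half-open cell convention to justify that at most one multiple of $2\phi$ lies in the relevant interval; beyond that the argument is a clean pigeonhole and requires no further machinery.
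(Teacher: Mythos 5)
Your proof is correct and follows essentially the same route as the paper's: bound $|e|\leq 2\phi$ via Fact~\ref{fact-edgelen}, argue per dimension that at most one residue class modulo $r$ can place a separating hyperplane between the two endpoints, and take the product $(r-1)^d$ over dimensions. The extra care you take with the half-open cell convention is a nice touch, but the underlying argument is identical.
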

\begin{proof}
Let $e \in E^*$.
As $\phi = \max_{p \in R} \phi(p)$, by Fact~\ref{fact-edgelen}, we have $|e| \leq 2 \phi$.
Thus, for every $i \in [d]$, there exists at most one integer $k_i^* \in \mathbb{N}$ such that the hyperplane $x_i = k_i^* \cdot 2\phi$ separates the two endpoints of $e$.
Note that $e$ is compatible with a tuple $(k_1,\dots,k_d) \in [r]^d$ iff for every $i \in [d]$ such that $k_i^*$ exists, $k_i$ and $k_i^*$ are not congruent modulo $r$.
For $i \in [d]$, define $K_i = [r]$ if $k_i^*$ does not exist and $K_i = \{k \in [r]: k \text{ is not congruent with } k_i^* \text{ modulo } r\}$ if $k_i^*$ exists.
Then the tuples which $e$ is compatible with are exactly those in $\prod_{i=1}^d K_i$.
We have $|K_i| \geq r-1$ for all $i \in [d]$.
Therefore, $|\prod_{i=1}^d K_i| \geq (r-1)^d$.
\end{proof}

For a tuple $\sigma = (k_1,\dots,k_d) \in [r]^d$, we denote by $\mathcal{Q}_\sigma$ the partition of $R$ induced by the grid $\varGamma_w(k_1 \cdot 2\phi,\dots,k_d \cdot 2\phi)$.

\begin{observation} \label{obs-onegoodtuple}
There exists a tuple $\sigma \in [r]^d$ such that $\sum_{Q \in \mathcal{Q}_\sigma}\mathsf{opt}(Q) \leq (1+\frac{\varepsilon}{4}) \cdot \mathsf{opt}(R)$.
\end{observation}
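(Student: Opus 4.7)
The plan is to adapt the averaging argument used in Observation~\ref{obs-onegoodidx} to the $d$-dimensional shifted-grid setting. Let $E^* \subseteq E(R)$ be an optimal solution of $\mathbf{Prob}(R)$. For each tuple $\sigma \in [r]^d$, let $M_\sigma^* \subseteq E^*$ denote the set of edges in $E^*$ that are \emph{not} compatible with $\sigma$, that is, the edges whose two endpoints fall in different cells of the grid induced by $\sigma$. By Observation~\ref{obs-manycomp}, each edge $e \in E^*$ is compatible with at least $(r-1)^d$ tuples, so it fails to be compatible with at most $r^d - (r-1)^d$ tuples. Summing over all tuples therefore yields
\begin{equation*}
\sum_{\sigma \in [r]^d} \sum_{e \in M_\sigma^*} |e| \;\leq\; \bigl(r^d - (r-1)^d\bigr) \sum_{e \in E^*} |e|,
\end{equation*}
so a standard averaging argument produces a tuple $\sigma \in [r]^d$ with $\sum_{e \in M_\sigma^*} |e| \leq \bigl(1 - (1-1/r)^d\bigr)\sum_{e \in E^*} |e| \leq (d/r)\sum_{e \in E^*}|e|$. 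Since $r = \lceil 4d/\varepsilon \rceil$, this gives $\sum_{e \in M_\sigma^*} |e| \leq \tfrac{\varepsilon}{4}\cdot \mathsf{cost}_R(E^*)$.

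Next I would fix this $\sigma$ and convert $E^* \setminus M_\sigma^*$ into feasible solutions for the subproblems $\mathbf{Prob}(Q)$, $Q \in \mathcal{Q}_\sigma$. By construction every edge in $E^* \setminus M_\sigma^*$ lies entirely within a single cell of the grid, hence entirely within a single part of $\mathcal{Q}_\sigma$. Writing $E_Q = (E^* \setminus M_\sigma^*) \cap E(Q)$, we get
\begin{equation*}
\sum_{Q \in \mathcal{Q}_\sigma}\mathsf{opt}(Q) \;\leq\; \sum_{Q \in \mathcal{Q}_\sigma} \mathsf{cost}_Q(E_Q) \;=\; \sum_{e \in E^* \setminus M_\sigma^*}|e| + \sum_{p \in R \setminus V(E^* \setminus M_\sigma^*)} \phi(p).
\end{equation*}
The right-hand side equals $\mathsf{cost}_R(E^*) - \sum_{e \in M_\sigma^*}|e| + \sum_{p \in P} \phi(p)$, where $P \subseteq V(M_\sigma^*)$ is the set of endpoints of edges in $M_\sigma^*$ that lose all of their covers when $M_\sigma^*$ is removed.

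Finally, I would bound the penalty inflation $\sum_{p \in P} \phi(p)$ exactly as in Observation~\ref{obs-onegoodidx}: every edge $e=(p,q) \in E(R)$ satisfies $|e| \geq \max\{\phi(p),\phi(q)\} \geq \tfrac12(\phi(p)+\phi(q))$, so $\sum_{p \in V(M_\sigma^*)} \phi(p) \leq 2\sum_{e \in M_\sigma^*}|e|$, which gives $\sum_{p\in P}\phi(p) - \sum_{e \in M_\sigma^*}|e| \leq \sum_{e \in M_\sigma^*}|e|$. Combining with the averaging bound above,
\begin{equation*}
\sum_{Q \in \mathcal{Q}_\sigma}\mathsf{opt}(Q) \;\leq\; \mathsf{cost}_R(E^*) + \sum_{e \in M_\sigma^*}|e| \;\leq\; \left(1+\tfrac{\varepsilon}{4}\right)\mathsf{opt}(R),
\end{equation*}
which is the desired inequality. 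The main obstacle is purely the combinatorial estimate $1 - (1-1/r)^d \leq d/r$ (together with choosing $r = \lceil 4d/\varepsilon \rceil$ so that this is at most $\varepsilon/4$); once that is in hand, the rest is bookkeeping that parallels the one-dimensional case already treated in Observation~\ref{obs-onegoodidx}.
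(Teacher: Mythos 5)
Your proof is correct and follows essentially the same route as the paper: the same averaging argument over tuples $\sigma \in [r]^d$ using Observation~\ref{obs-manycomp}, the same bound $(r^d-(r-1)^d)/r^d \leq d/r \leq \varepsilon/4$, and the same penalty-inflation bound via $|e| \geq \tfrac{1}{2}(\phi(p)+\phi(q))$. Your version is marginally tighter in that you track only the endpoints $P$ that actually become uncovered rather than all of $V(M_\sigma^*)$, but both yield the same final bound.
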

\begin{proof}
For $\sigma \in [r]^d$, denote by $N^*_\sigma = \{e \in E^*: e \text{ is not compatible with } \sigma\}$.
By Observation~\ref{obs-manycomp}, each $e \in E^*$ is contained in at most $r^d - (r-1)^d$ sets $N^*_\sigma$.
Thus, we have
\begin{equation*}
\sum_{\sigma \in [r]^d} \sum_{e \in N^*_\sigma} |e| \leq (r^d - (r-1)^d) \cdot \sum_{e \in E^*} |e| \leq (r^d - (r-1)^d) \cdot \mathsf{cost}_R(E^*).
\end{equation*}
So there exists some $\sigma \in [r]^d$ such that $\sum_{e \in N^*_\sigma} |e| \leq (r^d - (r-1)^d)/r^d \cdot \mathsf{cost}_R(E^*)$.
Note that $(r^d - (r-1)^d)/r^d \leq dr^{d-1}/r^d = d/r \leq \frac{\varepsilon}{4}$, which implies $\sum_{e \in N^*_\sigma} |e| \leq \frac{\varepsilon}{4} \cdot \mathsf{cost}_R(E^*)$.
We show that $\sigma$ satisfies the condition $\sum_{Q \in \mathcal{Q}_\sigma}\mathsf{opt}(Q) \leq (1+\frac{\varepsilon}{4}) \cdot \mathsf{opt}(R)$.
Clearly,
\begin{equation*}
    \sum_{Q \in \mathcal{Q}_\sigma}\mathsf{opt}(Q) \leq \sum_{Q \in \mathcal{Q}_\sigma} \mathsf{cost}_Q(E^* \cap E(Q)) \leq \mathsf{cost}_R(E^*) - \sum_{e \in N^*_\sigma} |e| + \sum_{p \in V(N^*_\sigma)} \phi(p).
\end{equation*}
Every edge $e = (p,q) \in E(S)$ satisfies $|e| \geq \max\{\phi(p),\phi(q)\} \geq \frac{1}{2} (\phi(p)+\phi(q))$ by the definition of $\phi$.
Therefore, $\sum_{e \in N^*_\sigma} |e| \geq \frac{1}{2} \sum_{p \in V(N^*_\sigma)} \phi(p)$.
It follows that
\begin{equation*}
    \mathsf{cost}_R(E^*) - \sum_{e \in N^*_\sigma} |e| + \sum_{p \in V(N^*_\sigma)} \phi(p) \leq \mathsf{cost}_R(E^*) + \sum_{e \in N^*_\sigma} |e|,
\end{equation*}
which implies $\sum_{Q \in \mathcal{Q}_\sigma}\mathsf{opt}(Q) \leq \mathsf{cost}_R(E^*) + \sum_{e \in N^*_\sigma} |e|$.
Since $\sum_{e \in N^*_\sigma} |e| \leq \frac{\varepsilon}{4} \cdot \mathsf{cost}_R(E^*)$, we finally have $\sum_{Q \in \mathcal{Q}_\sigma}\mathsf{opt}(Q) \leq (1+\frac{\varepsilon}{4}) \cdot \mathsf{cost}_R(E^*) = (1+\frac{\varepsilon}{4}) \cdot \mathsf{opt}(R)$.
\end{proof}

Using the above observation, we can now prove Lemma~\ref{lem-second}.
We consider every tuple $\sigma \in [r]^d$.
For each $\sigma \in [r]^d$, we first compute the partition $\mathcal{Q}_\sigma$ of $R$, which can be done in $O(|R|)$ time as the points in $R$ are sorted in every dimension.
For every $Q \in \mathcal{Q}_\sigma$,
\begin{equation*}
    W_Q \leq w = 2r\phi \leq 2 \left\lceil \frac{4d}{\varepsilon} \right\rceil \cdot \max_{p \in R} \phi(p) = 2 \left\lceil \frac{4d}{\varepsilon} \right\rceil \cdot \Delta_R \min_{p \in R} \phi(p),
\end{equation*}
and thus $W_Q \leq 2 \lceil \frac{4d}{\varepsilon} \rceil (\frac{3}{\varepsilon})^{\lceil \frac{3}{\varepsilon} \rceil} \cdot \min_{p \in Q} \phi(p)$ as $\Delta_R \leq (\frac{3}{\varepsilon})^{\lceil \frac{3}{\varepsilon} \rceil}$ and $\min_{p \in R} \phi(p) \leq \min_{p \in Q} \phi(p)$.
Therefore, by our assumption, for every $Q \in \mathcal{Q}_\sigma$, we can compute in $O_\varepsilon(|Q|)$ time a $(1+\frac{\varepsilon}{5})$-approximation solution $E_Q^* \subseteq E(Q)$ for $\mathbf{Prob}(Q)$.
The union $E_\sigma^* = \bigcup_{Q \in \mathcal{Q}_\sigma} E_Q^*$ is a solution of $\mathbf{Prob}(R)$ and $\mathsf{cost}_R(E_\sigma^*) = \sum_{Q \in \mathcal{Q}_\sigma} \mathsf{cost}_Q(E_Q^*) \leq (1+\frac{\varepsilon}{5}) \sum_{Q \in \mathcal{Q}_\sigma} \mathsf{opt}(Q)$.
The total time for constructing $E_\sigma^*$ is $O_\varepsilon(|R|)$, because $\sum_{Q \in \mathcal{Q}_\sigma} |Q| = |R|$.
We construct the solution $E_\sigma^*$ for all $\sigma \in [r]^d$ and finally output the best one among them.
Observation~\ref{obs-onegoodtuple} guarantees the existence of $\sigma \in [r]^d$ such that
\begin{equation*}
    \mathsf{cost}_R(E_\sigma^*) \leq \left(1+\frac{\varepsilon}{5}\right) \sum_{Q \in \mathcal{Q}_\sigma} \mathsf{opt}(Q) \leq \left(1+\frac{\varepsilon}{5}\right) \left(1+\frac{\varepsilon}{4}\right) \cdot \mathsf{opt}(R) \leq \left(1+\frac{\varepsilon}{2}\right) \cdot \mathsf{opt}(R).
\end{equation*}
Therefore, our algorithm gives a $(1+\frac{\varepsilon}{2})$-approximation solution for $\mathsf{opt}(R)$.
Since $r^d = O_\varepsilon(1)$, the total running time is still $O_\varepsilon(|R|)$.
This completes the proof of Lemma~\ref{lem-second}.

\subsection{Solving a well-structured subproblem} \label{sec-subprob}
By the reductions of Lemmas~\ref{lem-first} and~\ref{lem-second}, it now suffices to show that for every $Q \subseteq S$ satisfying $W_Q \leq 2 \lceil \frac{4d}{\varepsilon} \rceil (\frac{3}{\varepsilon})^{\lceil \frac{3}{\varepsilon} \rceil} \cdot \min_{p \in Q} \phi(p)$, one can compute in $O_\varepsilon(|Q|)$ time a $(1+\frac{\varepsilon}{5})$-approximation solution for $\mathbf{Prob}(Q)$.
In other words, our goal is to prove the following lemma.
\begin{lemma} \label{lem-restricted}
For every subset $Q \subseteq S$ satisfying $W_Q \leq 2 \lceil \frac{4d}{\varepsilon} \rceil (\frac{3}{\varepsilon})^{\lceil \frac{3}{\varepsilon} \rceil} \cdot \min_{p \in Q} \phi(p)$, one can compute in $O_\varepsilon(|Q|)$ time a $(1+\frac{\varepsilon}{5})$-approximation solution for $\mathbf{Prob}(Q)$.
\end{lemma}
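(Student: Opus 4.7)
My plan is to reduce $\mathbf{Prob}(Q)$ to a bounded-dimension integer linear program (ILP) via grid snapping, and then solve that ILP using the FPT algorithm of Lenstra as presented in~\cite{CyganFKLMPPS15}. Let $\phi_{\min} = \min_{p \in Q} \phi(p)$. First I would build a $d$-dimensional grid of cell-size $s = \Theta(\varepsilon \phi_{\min}/d)$ covering the bounding hypercube of $Q$. Because the hypothesis gives $W_Q = O_\varepsilon(\phi_{\min})$, the grid has $N = (W_Q/s)^d = O_\varepsilon(1)$ nonempty cells. Snap each point of $Q$ to the center of its cell. By Fact~\ref{fact-edgelen}, any edge $e=(p,q)$ in an optimal solution has $|e| \geq \phi_{\min}$, whereas snapping changes its length by at most $\sqrt{d}\cdot s = O(\varepsilon)\phi_{\min} = O(\varepsilon)|e|$; since penalty values are defined relative to $S$ and thus unchanged, the snapped optimum is within a $1+O(\varepsilon)$ factor of $\mathsf{opt}(Q)$.

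Next I would round each penalty $\phi(p)$ up to the nearest power of $1+\varepsilon/50$; since $\phi_{\max}/\phi_{\min} \leq (3/\varepsilon)^{\lceil 3/\varepsilon \rceil} = O_\varepsilon(1)$, only $O_\varepsilon(1)$ distinct rounded values occur, at a further $1+O(\varepsilon)$ multiplicative loss. To tame the color dimension (which can have size up to $|Q|$), I classify every color $c$ by its cell-occupancy \emph{footprint} $\sigma(c) \in \{0,1\}^N$, which marks in which cells $c$ appears. There are $2^N = O_\varepsilon(1)$ possible footprints, so I take a point's \emph{type} to be the triple (cell, footprint class, rounded $\phi$), drawn from a universe of size $T = O_\varepsilon(1)$. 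The only data that the ILP needs are the type multiplicities $n_\tau$.

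The ILP has nonnegative integer variables $x_{\tau,\tau'}$ counting edges of the solution whose endpoint types are $\tau,\tau'$, together with $u_\tau$ counting type-$\tau$ points that pay the penalty. Coverage constraints $u_\tau + \sum_{\tau'} x_{\tau,\tau'} \geq n_\tau$ ensure every point is either incident to an edge or penalized, and a compatibility constraint forbids $x_{\tau,\tau'}>0$ whenever $\tau$ and $\tau'$ would necessarily share the same color (which is detectable at the footprint level). The objective minimizes $\sum_{\tau,\tau'} x_{\tau,\tau'}\, d(\tau,\tau') + \sum_\tau u_\tau\, \phi_\tau$, where $d(\tau,\tau')$ is the distance between the two cell centers (zero if $\tau,\tau'$ lie in the same cell) and $\phi_\tau$ is the rounded penalty. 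With $O(T^2) = O_\varepsilon(1)$ variables and constraints, Lenstra's theorem solves this ILP in time polynomial in the input bit-length, i.e.\ in $O_\varepsilon(\log|Q|)$. Computing the $n_\tau$ from the input and converting an optimal ILP solution back into a concrete subset of $E(Q)$ with a penalty assignment take $O_\varepsilon(|Q|)$ time, which dominates. Composing the $1+O(\varepsilon)$ losses from snapping, $\phi$-rounding, and the ILP round-off and tuning constants appropriately yields the promised $(1+\varepsilon/5)$-approximation.

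The main obstacle, and the step I expect to require the most care, is justifying the color-footprint aggregation. Two distinct colors can have the same footprint but very different per-cell multiplicities, yet the ILP treats all colors within a footprint class as interchangeable; a priori, the optimal ILP plan might not correspond to any feasible edge-set in $E(Q)$. I plan to handle this by a rearrangement argument: starting from an optimal snapped solution, I would use the near-zero cost of intra-cell edges to reroute endpoints within each (cell, footprint-class) block so that the resulting solution is realizable by the ILP's coarse variables, losing only a $1+O(\varepsilon)$ factor in total cost. A secondary wrinkle is that within-cell snapped edges have snapped cost zero but true cost $O(\varepsilon\phi_{\min})$, so summing over up to $|Q|$ such edges must be charged against $\mathsf{opt}(Q)$; this will follow because each covered point contributes at least $\Omega(\phi_{\min})$ to $\mathsf{opt}(Q)$ in any solution that competes with paying the penalty everywhere.
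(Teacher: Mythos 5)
Your grid-snapping plus small-ILP framework is the right high-level strategy, and your cell size $s = \Theta(\varepsilon\phi_{\min}/d)$ is the same as the paper's. But you've missed the single observation that makes the whole approach clean, and as a result you've invented a complication (color footprints) that has a genuine hole.

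The key fact the paper proves (their Observation~\ref{obs-samecl}) is that with this cell size, \emph{every cell is monochromatic}. Indeed, if $p,q$ lie in the same cell and $\mathsf{cl}(p)\neq\mathsf{cl}(q)$, then $\dist(p,q)\leq\sqrt{d}\,s<\phi_{\min}\leq\phi(p)$, which contradicts the definition of $\phi(p)$ as the nearest-foreign-neighbor distance. Once you know each cell carries a single color, the ``color dimension'' you were worried about vanishes: the ILP needs only the pairwise edge counts $f(C,C')$ between cells (with the constraint $f(C,C')=0$ when $\mathsf{cl}(C)=\mathsf{cl}(C')$, and $f(C,C')\leq|Q\cap C|\cdot|Q\cap C'|$ otherwise) and the per-cell uncovered counts $g(C)$. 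That is $O_\varepsilon(1)$ variables with no aggregation across colors. Moreover there are no intra-cell edges at all, so your secondary worry about ``within-cell snapped edges with snapped cost zero'' is moot.

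Your footprint mechanism, by contrast, is not just unnecessary but unsound as stated. Two colors can share a footprint $\sigma$, and the ILP cannot tell whether an $x_{\tau,\tau'}$ edge requested between two footprint-equivalent blocks is realizable, because within a (cell, footprint, rounded-$\phi$) block the per-color multiplicities are arbitrary. Your compatibility constraint only forbids pairs that ``necessarily'' share a color, which is the wrong direction: the danger is the ILP over-counting cross-color edges that don't exist. You flag this yourself as ``the main obstacle'' and gesture at a ``rearrangement argument,'' but that argument is not given, and I don't see how it would go through in general — the ILP optimum could demand a bipartite edge multiset between two same-footprint blocks that no choice of actual colors supports. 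Since the monochromatic-cell observation eliminates the problem entirely, the fix is to drop footprints and follow the simpler route.

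The remaining pieces of your proposal — rounding $\phi$-values, rounding objective coefficients to integers to get a bona-fide ILP, invoking the FPT algorithm of~\cite{CyganFKLMPPS15}, and reconstructing an edge set from the ILP solution in $O_\varepsilon(|Q|)$ time — all match the paper's proof in substance and would go through once the color issue is resolved.
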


The basic idea to solve such a subproblem $\mathbf{Prob}(Q)$ is the following.
Using the fact $\phi(p) = \Omega_\varepsilon(|W_Q|)$ for all $p \in Q$, we can partition the bounding box of $Q$ into $O_\varepsilon(1)$ small hypercubes such that the points in each hypercube have the same color and very similar $\phi$-values.
This allows us to view the points in a small hypercube as a single ``point'' and formulate an integer linear program with $O_\varepsilon(1)$ variables.
We then apply the FPT algorithm for integer linear programing~\cite{CyganFKLMPPS15} to solve the problem.
Below we discuss this in detail.

If all points in $Q$ have the same color, then $E(Q) = \emptyset$ and the subproblem $\mathbf{Prob}(Q)$ is trivial.
So assume $Q$ contains points of at least two colors.
Let $\Box$ be a hypercube containing $Q$ with side-length $w = 2 \lceil \frac{4d}{\varepsilon} \rceil (\frac{3}{\varepsilon})^{\lceil \frac{3}{\varepsilon} \rceil} \cdot \min_{p \in Q} \phi(p)$.
We set $r = 2d \lceil \frac{4d}{\varepsilon} \rceil (\frac{3}{\varepsilon})^{\lceil \frac{3}{\varepsilon} \rceil} \lceil \frac{44}{\varepsilon} \rceil$ and evenly partition $\Box$ into $r^d$ smaller hypercubes with side-length $\frac{w}{r}$.
Let $\mathcal{C}$ be the set of the smaller hypercubes which contain at least one point in $Q$.
We have the following observation.
\begin{observation} \label{obs-samecl}
    For every $C \in \mathcal{C}$, all points in $Q \cap C$ have the same color.
\end{observation}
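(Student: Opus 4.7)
The plan is a direct diameter argument: show each small cube $C$ is so small (relative to the smallest $\phi$-value among points of $Q$) that no two points of $Q \cap C$ could possibly have different colors, because otherwise they would witness a too-close foreign neighbor.

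First I would compute the side-length of each cube $C \in \mathcal{C}$, which is $w/r$. Substituting the definitions of $w$ and $r$, the factors $2\lceil 4d/\varepsilon\rceil (3/\varepsilon)^{\lceil 3/\varepsilon\rceil}$ cancel, leaving $w/r = \min_{p\in Q}\phi(p) \big/ \bigl(d \lceil 44/\varepsilon\rceil\bigr)$. Next I would bound the Euclidean diameter of $C$: a hypercube of side-length $s$ in $\mathbb{R}^d$ has diameter $\sqrt{d}\cdot s$, so any two points of $C$ are within distance $\sqrt{d}\cdot w/r = \min_{p\in Q}\phi(p)\big/\bigl(\sqrt{d}\lceil 44/\varepsilon\rceil\bigr)$, which is strictly less than $\min_{p\in Q}\phi(p)$.

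Now I would argue by contradiction. Suppose some $C \in \mathcal{C}$ contains two points $p_1,p_2 \in Q$ with $\mathsf{cl}(p_1)\neq \mathsf{cl}(p_2)$. Since $Q \subseteq S$, the point $p_2$ is a foreign neighbor of $p_1$ in $S$, and by the diameter bound above, $\|p_1-p_2\| < \min_{p\in Q}\phi(p) \leq \phi(p_1)$. But $\phi(p_1)$ is defined to be the distance from $p_1$ to its \emph{nearest} foreign neighbor in $S$, so this is a contradiction. Therefore every $p_1,p_2 \in Q\cap C$ share the same color, proving the observation.

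There is essentially no obstacle here beyond making sure the arithmetic is clean: the constants in the definitions of $w$ and $r$ are chosen precisely so that $w/r$ absorbs both the $\Delta_R$ blowup (the $(3/\varepsilon)^{\lceil 3/\varepsilon\rceil}$ factor) and the $2\lceil 4d/\varepsilon\rceil$ factor from the bound on $W_Q$, while the extra $d\lceil 44/\varepsilon\rceil$ in the denominator of $r$ is there exactly to make the Euclidean diameter strictly smaller than $\min_{p\in Q}\phi(p)$. For a general $L_p$-norm one uses the same proof with the diameter bound $d^{1/p}\cdot w/r$; since $d^{1/p}\leq d$ and $d\lceil 44/\varepsilon\rceil > d$, the strict inequality still goes through.
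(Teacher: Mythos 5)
Your proof is correct and follows essentially the same route as the paper's: bound the diameter of a small cube $C$ using the explicit formula $w/r = \min_{p\in Q}\phi(p)\big/\bigl(d\lceil 44/\varepsilon\rceil\bigr)$, then observe that two differently-colored points of $Q$ inside $C$ would give a foreign neighbor strictly closer than $\phi(p_1)$, a contradiction. The paper uses the slightly looser bound $dw/r$ for the diameter instead of your sharper $\sqrt{d}\cdot w/r$, but both are strictly below $\min_{p\in Q}\phi(p)$, so the argument is the same.
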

\begin{proof}
Equivalently, we show no edge in $E(Q)$ has both endpoints in $C$.
Assume $e = (p,q)$ with $p,q \in Q \cap C$.
Then $|e| \leq \frac{dw}{r} < \phi(p)$, contradicting the fact that $|e| \geq \phi(p)$.
\end{proof}

By the above observation, for each $C \in \mathcal{C}$, we can define the \textit{color} of $C$, denoted by $\mathsf{cl}(C)$, as the color of the points in $Q \cap C$.
For a subset $E \subseteq E(Q)$, we define a function $f_E:\mathcal{C} \times \mathcal{C} \rightarrow \mathbb{N}$, where $f_E(C,C')$ is equal to the number of edges in $E$ whose one endpoint is in $C$ and the other endpoint is in $C'$.
Also, we define a function $g_E: \mathcal{C} \rightarrow \mathbb{N}$, where $g_E(C) = |Q \cap C| - |V(E) \cap C|$ is the number of points in $C$ not matched by $E$.

\begin{observation} \label{obs-sgnf}
    Let $f:\mathcal{C} \times \mathcal{C} \rightarrow \mathbb{N}$ and $g: \mathcal{C} \rightarrow \mathbb{N}$ be two functions.
    If there exists $E \subseteq E(Q)$ such that $f_E = f$ and $g_E = g$, then the following conditions hold.
    \smallskip    
    \begin{enumerate}
        \item For any $C,C' \in \mathcal{C}$, we have $f(C,C') = f(C',C) = 0$ if $\mathsf{cl}(C) = \mathsf{cl}(C')$, and $f(C,C') = f(C',C) \leq |Q \cap C| \cdot |Q \cap C'|$ if $\mathsf{cl}(C) \neq \mathsf{cl}(C')$.
        \smallskip
        \item For any $C \in \mathcal{C}$, we have $g(C) + \sum_{C' \in \mathcal{C}} f(C,C') \geq |Q \cap C|$.
    \end{enumerate}
    \smallskip
    Conversely, if $f$ and $g$ satisfy the conditions, then one can compute $E \subseteq E(Q)$ in $O_\varepsilon(|Q|)$ time such that $f_E(C,C') \leq f(C,C')$ for all $(C,C') \in \mathcal{C} \times \mathcal{C}$ and $g_E(C) \leq g(C)$ for all $C \in \mathcal{C}$.
\end{observation}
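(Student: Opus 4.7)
Both conditions follow directly from the definitions of $f_E$ and $g_E$ together with Observation~\ref{obs-samecl}. Since $E \subseteq E(Q)$ consists only of edges between differently colored points and each cell has a single color, every edge counted by $f_E(C,C')$ must have $\mathsf{cl}(C) \neq \mathsf{cl}(C')$, giving $f(C,C') = 0$ when colors coincide; the symmetry $f(C,C') = f(C',C)$ comes from edges being unordered, and $f(C,C') \leq |Q \cap C| \cdot |Q \cap C'|$ is the trivial pair-count bound. For condition 2, partitioning $Q \cap C$ into covered and uncovered points yields $|Q \cap C| = |V(E) \cap C| + g_E(C)$, and each edge of $E$ touching $C$ contributes at most one distinct point to $V(E) \cap C$, so $|V(E) \cap C| \leq \sum_{C'} f_E(C,C')$; substituting $f_E = f$ and $g_E = g$ gives condition 2.

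\textbf{Converse: the construction.} Suppose $f, g$ satisfy both conditions. For each $C \in \mathcal{C}$, let $t(C) = \max(|Q \cap C| - g(C), 0)$ and fix an arbitrary subset $P(C) \subseteq Q \cap C$ of size $t(C)$ (the points that \emph{must} end up covered). Initialize $E = \emptyset$ and iterate the following step: pick any $C$ with $|P(C)| > 0$; among all $C^\ast \in \mathcal{C}$ with $\mathsf{cl}(C^\ast) \neq \mathsf{cl}(C)$ and current budget $f(C,C^\ast) > 0$, choose one --- call it $C'$ --- with $|P(C')| > 0$ if such a choice exists, otherwise any one. Pick $p \in P(C)$ and take $q \in P(C')$ when the latter is nonempty, or any $q \in Q \cap C'$ otherwise. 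Add $(p,q)$ to $E$, remove $p$ from $P(C)$ and (if applicable) $q$ from $P(C')$, and decrement $f(C,C') = f(C',C)$ by one.

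\textbf{Correctness via an invariant.} Maintain the invariant $|P(C)| \leq \sum_{C'} f(C,C')$ for every $C$ throughout. Initially this is exactly condition 2 applied to $t(C)$. After the step above, at cell $C$ both sides decrease by one; at the partner $C'$, in the preferred case ($q \in P(C')$) both sides also decrease by one, while in the fallback case we have $|P(C')| = 0$ already and only the right-hand side drops, but $\sum_{C''} f(C',C'') \geq 1$ before the step (since $f(C,C') \geq 1$), so $0 \leq \sum - 1$ still holds. All other cells are untouched. Hence, whenever $|P(C)| > 0$, the invariant forces $\sum_{C'} f(C,C') > 0$, and condition 1 (preserved because $f$ never increases) ensures any such $C'$ has a different color from $C$, so the step can always proceed. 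After termination, $|V(E) \cap C| \geq t(C)$ for every $C$, which is precisely $g_E(C) \leq g(C)$; and $f_E(C,C') \leq f(C,C')$ holds by construction since we only add edges after verifying a positive remaining budget.

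\textbf{Running time and main obstacle.} Computing $\mathcal{C}$, the counts $|Q \cap C|$, and the sets $P(C)$ from $Q$ and $g$ takes $O(|Q|)$ time because $|\mathcal{C}| = O_\varepsilon(1)$. Each iteration adds one edge and scans the $O_\varepsilon(1)$ cells to find a partner, and the total number of iterations is at most $\sum_C t(C) \leq |Q|$, so the whole procedure runs in $O_\varepsilon(|Q|)$ time. The main difficulty in the proof is the invariant together with the ``prefer $|P(C')| > 0$'' tiebreaker: without this tiebreaker, a step for $C$ could consume budget of a pair $(C,C')$ without covering any point in $P(C')$, eventually leaving $C'$ with $|P(C')| > \sum_{C''} f(C',C'')$ and no valid partner; the tiebreaker rules this case out, after which the rest of the argument is routine.
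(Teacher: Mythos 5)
Your forward direction matches the paper's almost verbatim. The converse is where you diverge, and your construction is genuinely different: the paper builds $E$ by adding \emph{exactly} $f(C,C')$ edges for every pair of differently colored cells (after first capping each $f(C,C')$ at $|Q|$ so the edge count stays $O_\varepsilon(|Q|)$), using a ``prefer endpoints not yet in $V(E)$'' rule, and then shows $g_E \le g$ via a two-case analysis (either a cell's points are all covered, or exactly $\sum_{C'} f(C,C')$ of them are). You instead identify the $t(C) = \max(|Q\cap C| - g(C), 0)$ points in each cell that \emph{must} end up covered, add one edge per iteration to cover such a point, and prove the process never gets stuck via the invariant $|P(C)| \le \sum_{C'} f(C,C')$, which is seeded by condition~2 and preserved at each step. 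Both are correct and $O_\varepsilon(|Q|)$; your version adds at most $\sum_C t(C) \le |Q|$ edges rather than $O_\varepsilon(|Q|)$ edges, which is why you do not need the paper's capping step at all.

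One small correction: the ``prefer a partner $C'$ with $|P(C')|>0$'' tiebreaker is not actually needed, and the failure mode you sketch for its absence does not occur. If the active cell consumes a unit of $f(C,C')$ while $|P(C')|=0$, then $|P(C')|$ remains $0$, which trivially satisfies the invariant, and $\sum_{C''} f(C',C'')$ stays non-negative because it was at least $f(C,C')\ge 1$ before the decrement --- exactly the calculation you already do for the fallback case. Since the step unconditionally takes $q \in P(C')$ whenever $P(C')$ is nonempty, no budget is ever ``wasted'' on a partner that had mandatory points remaining. So the tiebreaker is harmless but superfluous; your justification for it is what is off, not the construction itself.
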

\begin{proof}
Suppose $f_E = f$ and $g_E = g$ for some $E \subseteq E(Q)$.
By the definition of $f_E$, it is clear that $f(C,C') = f(C',C)$ for any $C,C' \in \mathcal{C}$.
For any $C,C' \in \mathcal{C}$ with $\mathsf{cl}(C) = \mathsf{cl}(C')$, there cannot be any edge in $E$ with one endpoint is in $C$ and the other endpoint is in $C'$, and thus $f(C,C') = 0$.
For any $C,C' \in \mathcal{C}$ with $\mathsf{cl}(C) \neq \mathsf{cl}(C')$, there can be at most $|Q \cap C| \cdot |Q \cap C'|$ edges in $E$ with one endpoint is in $C$ and the other endpoint is in $C'$, and thus $f(C,C') \leq |Q \cap C| \cdot |Q \cap C'|$.
So condition~1 holds.
To see condition~2, observe that for any $C \in \mathcal{C}$, $|V(E) \cap C| \leq \sum_{C' \in \mathcal{C}} f_E(C,C')$.
Thus, by the definition of $g_E$, we directly have $g(C) + \sum_{C' \in \mathcal{C}} f(C,C') \geq g(C) + |V(E) \cap C| = |Q \cap C|$.

Now suppose $f$ and $g$ satisfy the two conditions.
For $(C,C') \in \mathcal{C} \times \mathcal{C}$ with $f(C,C') > |Q|$, we set $f(C,C') = |Q|$.
After this change, $f$ and $g$ still satisfy the conditions.
We construct the desired subset $E \subseteq E(Q)$ as follows.
Initially, set $E = \emptyset$.
For every $C,C' \in \mathcal{C}$ with $\mathsf{cl}(C) \neq \mathsf{cl}(C')$, we shall pick $f(C,C')$ edges in $E(Q)$ with one endpoint is in $C$ and the other endpoint is in $C'$, and add them to $E$; this is possible since $f(C,C') \leq |Q \cap C| \cdot |Q \cap C'|$ by condition~1.
The resulting $E$ guarantees $f_E = f$ (and thus the $f_E$-values are smaller than or equal to the original $f$-values).
To further guarantee $g_E(C) \leq g(C)$ for all $C \in \mathcal{C}$, we use the following rule to pick edges.
When we want to pick an edge $e$ with one endpoint in $C$ and the other endpoint in $C'$ (and add it to $E$), we always choose the endpoints of $e$ among the points in $Q \cap C$ and $Q \cap C'$ that are not the endpoints of the edges in the current $E$; if no such points exist, we then choose the endpoints of $e$ arbitrarily.
In this way, we can guarantee that at the end, for every $C \in \mathcal{C}$, either $|V(E) \cap C| = \sum_{C' \in \mathcal{C}} f(C,C')$ or $Q \cap C = V(E) \cap C$.
In the former case, $g_E(C) = |Q \cap C| - |V(E) \cap C| = |Q \cap C| - \sum_{C' \in \mathcal{C}} f(C,C') \leq g(C)$ by condition~2.
In the latter case, $g_E(C) = 0 \leq g(C)$.
This simple construction of $E$ can be done in $O_\varepsilon(|Q|)$ time since $f(C,C') \leq |Q|$ for all $(C,C') \in \mathcal{C} \times \mathcal{C}$.
\end{proof}

For convenience, we write $\phi(C) = \min_{p \in Q \cap C} \phi(p)$ for $C \in \mathcal{C}$.
For $C,C' \in \mathcal{C}$, let $\mathsf{dist}(C,C')$ denote the Euclidean distance between the centers of the hypercubes $C$ and $C'$.
Now for functions $f:\mathcal{C} \times \mathcal{C} \rightarrow \mathbb{N}$ and $g: \mathcal{C} \rightarrow \mathbb{N}$, we define
\begin{equation*}
    \mathsf{cost}(f,g) = \frac{1}{2} \sum_{(C,C') \in \mathcal{C} \times \mathcal{C}} (f(C,C') \cdot \mathsf{dist}(C,C')) + \sum_{C \in \mathcal{C}} (g(C) \cdot \phi(C)).
\end{equation*}

\begin{observation} \label{obs-sgnfcost}
    For every subset $E \subseteq E(Q)$, we have
    \begin{equation*}
        \left(1-\frac{\varepsilon}{21}\right) \mathsf{cost}(f_E,g_E) \leq \mathsf{cost}_Q(E) \leq \left(1+\frac{\varepsilon}{21}\right) \mathsf{cost}(f_E,g_E).
    \end{equation*}
    In particular, for any two subsets $E,E' \subseteq E(Q)$ satisfying $\mathsf{cost}(f_E,g_E) \leq c \cdot \mathsf{cost}(f_{E'},g_{E'})$, we have $\mathsf{cost}_Q(E) \leq (1+\frac{\varepsilon}{10}) c \cdot \mathsf{cost}_Q(E')$.
\end{observation}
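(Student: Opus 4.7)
The plan is to compare $\mathsf{cost}_Q(E)$ and $\mathsf{cost}(f_E,g_E)$ term by term: the ``edge'' part $\sum_{e\in E}|e|$ versus $\frac{1}{2}\sum f_E(C,C')\cdot\mathsf{dist}(C,C')$, and the ``penalty'' part $\sum_{p\in Q\setminus V(E)}\phi(p)$ versus $\sum_C g_E(C)\cdot\phi(C)$. The enabling observation is a bound on the Euclidean diameter of each cell in $\mathcal{C}$: since each cell has side-length $w/r=\min_{p\in Q}\phi(p)/(d\lceil 44/\varepsilon\rceil)$, its Euclidean diameter is $\sqrt{d}\cdot w/r\leq \frac{\varepsilon}{44}\min_{p\in Q}\phi(p)\leq \frac{\varepsilon}{44}\phi(p)$ for every $p\in Q$. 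This is the only place the odd-looking constant $44$ in the definition of $r$ is used.

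Using this, I would establish two multiplicative estimates. First, for any edge $e=(p,q)\in E(Q)$ with $p\in C$, $q\in C'$, two applications of the triangle inequality (through the centers of $C$ and $C'$) give $|\mathsf{dist}(C,C')-|e||\leq \mathrm{diam}(C)/2+\mathrm{diam}(C')/2\leq \frac{\varepsilon}{44}\phi(p)\leq \frac{\varepsilon}{44}|e|$, using $|e|\geq\phi(p)$ from the definition of $\phi(p)$; so $(1-\frac{\varepsilon}{44})|e|\leq \mathsf{dist}(C,C')\leq (1+\frac{\varepsilon}{44})|e|$. Second, for any $p\in Q\cap C$, I would pick $p'\in Q\cap C$ achieving $\phi(p')=\phi(C)$, and let $q$ be its nearest foreign neighbor in $S$; since $p,p'$ share a color by Observation~\ref{obs-samecl}, $q$ is also a foreign neighbor of $p$, whence $\phi(p)\leq |p-q|\leq |p-p'|+\phi(p')\leq \mathrm{diam}(C)+\phi(C)\leq (1+\frac{\varepsilon}{44})\phi(C)$. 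Combined with the trivial $\phi(C)\leq\phi(p)$, this gives $(1-\frac{\varepsilon}{44})\phi(p)\leq \phi(C)\leq \phi(p)$.

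Next I would rewrite $\mathsf{cost}(f_E,g_E)$ per-edge and per-uncovered-point: because $f_E$ is symmetric, every $e=(p,q)\in E$ is counted twice in $\sum_{(C,C')}f_E(C,C')\cdot\mathsf{dist}(C,C')$, so the first term equals $\sum_{e=(p,q)\in E}\mathsf{dist}(C_p,C_q)$; the second term equals $\sum_{p\in Q\setminus V(E)}\phi(C_p)$. Substituting the two estimates from the previous paragraph, each term of $\mathsf{cost}(f_E,g_E)$ lies within a factor $1\pm\frac{\varepsilon}{44}$ of the corresponding term of $\mathsf{cost}_Q(E)$, so $\frac{1}{1+\varepsilon/44}\cdot\mathsf{cost}(f_E,g_E)\leq \mathsf{cost}_Q(E)\leq \frac{1}{1-\varepsilon/44}\cdot\mathsf{cost}(f_E,g_E)$. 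The claimed $(1\pm\frac{\varepsilon}{21})$-bound then follows from the elementary inequalities $\frac{1}{1-\varepsilon/44}\leq 1+\frac{\varepsilon}{21}$ and $\frac{1}{1+\varepsilon/44}\geq 1-\frac{\varepsilon}{21}$ for $\varepsilon\leq 1$. The ``in particular'' corollary is then immediate by chaining: $\mathsf{cost}_Q(E)\leq(1+\frac{\varepsilon}{21})\mathsf{cost}(f_E,g_E)\leq (1+\frac{\varepsilon}{21})c\cdot\mathsf{cost}(f_{E'},g_{E'})\leq \frac{1+\varepsilon/21}{1-\varepsilon/21}c\cdot\mathsf{cost}_Q(E')\leq (1+\frac{\varepsilon}{10})c\cdot\mathsf{cost}_Q(E')$.

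There is no real conceptual obstacle here; the only thing to be careful about is the arithmetic bookkeeping to reconcile the constant $44$ (built into $r$) with the $\varepsilon/21$ and $\varepsilon/10$ in the statement, and to verify that the per-point estimate uses Observation~\ref{obs-samecl} correctly so that the foreign neighbor of a surrogate $p'\in Q\cap C$ transfers validly to $p$.
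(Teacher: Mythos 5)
Your proof is correct and follows essentially the same approach as the paper's: establish per-term multiplicative estimates relating $|e|$ to $\mathsf{dist}(C_p,C_q)$ and $\phi(p)$ to $\phi(C_p)$ via the cell diameter bound, rewrite $\mathsf{cost}(f_E,g_E)$ per edge and per uncovered point (noting the factor $\frac12$ absorbs the double-counting from $f_E$'s symmetry), and sum. The only cosmetic differences are that you track the slightly tighter $\sqrt d$ Euclidean diameter and phrase the estimates as $1\pm\frac{\varepsilon}{44}$ multiples of the original quantities before inverting, whereas the paper uses the cruder diameter bound $d\cdot\frac{w}{r}$ and derives the $1\pm\frac{\varepsilon}{21}$ factors directly relative to $\mathsf{dist}(C,C')$ and $\phi(C)$; both routes give the stated constants.
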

\begin{proof}
We first show for any $e \in E(Q)$, $(1-\frac{\varepsilon}{21}) \cdot \mathsf{dist}(C,C') \leq |e| \leq (1+\frac{\varepsilon}{21}) \cdot \mathsf{dist}(C,C')$, where $C,C' \in \mathcal{C}$ are the hypercubes containing the two endpoints of $e$.
Suppose $e = (p,q)$ where $p \in C$ and $q \in C'$.
We have $|e| \geq \max\{\phi(p),\phi(q)\} \geq \frac{w}{r} \cdot d\lceil \frac{44}{\varepsilon} \rceil$.
Observe that the distance between any two points in $C$ (or $C'$) is at most $\frac{w}{r} \cdot d$.
Thus, the difference between $|e|$ and $\mathsf{dist}(C,C')$ is at most $(2/\lceil \frac{44}{\varepsilon} \rceil) \cdot |e| \leq \frac{\varepsilon}{22} \cdot |e|$, which implies $\mathsf{dist}(C,C') \geq \frac{21}{22} \cdot |e|$.
It follows that the difference between $|e|$ and $\mathsf{dist}(C,C')$ is at most $\frac{\varepsilon}{21} \cdot \mathsf{dist}(C,C')$.
So we have $(1-\frac{\varepsilon}{21}) \cdot \mathsf{dist}(C,C') \leq |e| \leq (1+\frac{\varepsilon}{21}) \cdot \mathsf{dist}(C,C')$.

Next, we show that for any $p \in Q$, $\phi(C) \leq \phi(p) \leq (1+\frac{\varepsilon}{21}) \cdot \phi(C)$, where $C \in \mathcal{C}$ is the hypercube containing $p$.
The inequality $\phi(C) \leq \phi(p)$ follows from the definition of $\phi(C)$.
To see $\phi(p) \leq (1+\frac{\varepsilon}{21}) \cdot \phi(C)$, suppose $\phi(C) = \phi(q)$ for $q \in Q \cap C$.
By Observation~\ref{obs-samecl}, $\mathsf{cl}(p) = \mathsf{cl}(q)$.
This implies $|\phi(p) - \phi(q)| \leq d \cdot \frac{w}{r}$, since the side-length of $C$ is $\frac{w}{r}$.
Note that $d \cdot \frac{w}{r} \leq \frac{\varepsilon}{44} \phi(q) \leq \frac{\varepsilon}{21} \phi(q)$.
Therefore, $\phi(p) \leq (1+\frac{\varepsilon}{21}) \cdot \phi(q)$.

Now we prove the observation.
For each $p \in Q$, we denote by $C_p \in \mathcal{C}$ the hypercube containing $p$.
Let $E = \{e_1,\dots,e_m\} \subseteq E(Q)$ and suppose $e_i = (p_i,q_i)$ for $i \in [m]$.
Then we have $\mathsf{cost}_Q(E) = \sum_{i=1}^m |e_i| + \sum_{p \in Q \backslash V(E)} \phi(p)$.
As shown above, $(1-\frac{\varepsilon}{21}) \cdot \mathsf{dist}(C_{p_i},C_{q_i}) \leq |e_i| \leq (1+\frac{\varepsilon}{21}) \cdot \mathsf{dist}(C_{p_i},C_{q_i})$ for all $i \in [m]$ and $\phi(C_p) \leq \phi(p) \leq (1+\frac{\varepsilon}{21}) \cdot \phi(C_p)$ for all $p \in Q \backslash V(E)$.
Therefore, if we set $\alpha = \sum_{i=1}^m \mathsf{dist}(C_{p_i},C_{q_i}) + \sum_{p \in Q \backslash V(E)} \phi(C_p)$, then we have $(1-\frac{\varepsilon}{21}) \cdot \alpha \leq \mathsf{cost}_Q(E) \leq (1+\frac{\varepsilon}{21}) \cdot \alpha$.
It suffices to show $\alpha = \mathsf{cost}(f_E,g_E)$.
By the definitions of the functions $f_E$ and $g_E$, we have
\begin{align*}
    \alpha &= \frac{1}{2}\sum_{i=1}^m (\mathsf{dist}(C_{p_i},C_{q_i})+\mathsf{dist}(C_{q_i},C_{p_i})) + \sum_{p \in Q \backslash V(E)} \phi(C_p) \\
    & = \frac{1}{2} \sum_{(C,C') \in \mathcal{C} \times \mathcal{C}} (f_E(C,C') \cdot \mathsf{dist}(C,C')) + \sum_{C \in \mathcal{C}} (g_E(C) \cdot \phi(C)).
\end{align*}
Thus, $\alpha = \mathsf{cost}(f_E,g_E)$.
To see the second statement in the observation, let $E,E' \subseteq E(Q)$ satisfying $\mathsf{cost}(f_E,g_E) \leq c \cdot \mathsf{cost}(f_{E'},g_{E'})$.
Write $\alpha = \mathsf{cost}(f_E,g_E)$ and $\alpha' = \mathsf{cost}(f_{E'},g_{E'})$ for convenience.
It follows that 
\begin{equation*}
    \mathsf{cost}_Q(E) \leq \left(1+\frac{\varepsilon}{21}\right) \cdot \alpha \leq \left(1+\frac{\varepsilon}{21}\right)c \cdot \alpha' \leq \frac{(1+\frac{\varepsilon}{21})c}{1-\frac{\varepsilon}{21}} \cdot \mathsf{cost}_Q(E').
\end{equation*}
As $(1+\frac{\varepsilon}{21})/(1-\frac{\varepsilon}{21}) \leq 1+\frac{\varepsilon}{10}$, we have $\mathsf{cost}_Q(E) \leq (1+\frac{\varepsilon}{10}) c \cdot \mathsf{cost}_Q(E')$.
\end{proof}

Thanks to Observation~\ref{obs-sgnf} and~\ref{obs-sgnfcost}, we can use the following idea to compute a $(1+\frac{\varepsilon}{5})$-approximation solution for $\mathbf{Prob}(Q)$.
We say a pair $(f,g)$ of functions $f: \mathcal{C} \times \mathcal{C} \rightarrow \mathbb{N}$ and $g: \mathcal{C} \rightarrow \mathbb{N}$ is \textit{valid} if it satisfies the two conditions in Observation~\ref{obs-sgnf}.
First, we compute a valid pair $(f,g)$ satisfying $\mathsf{cost}(f,g) \leq (1+\frac{\varepsilon}{11}) \cdot \mathsf{cost}(f',g')$ for any valid pair $(f',g')$.
We shall show later how to efficiently compute such a pair $(f,g)$ by solving an integer linear program.
Then we use Observation~\ref{obs-sgnf} to compute in $O_\varepsilon(|Q|)$ time a subset $E \subseteq E(Q)$ satisfying $f_E(C,C') \leq f(C,C')$ for all $(C,C') \in \mathcal{C} \times \mathcal{C}$ and $g_E(C) \leq g(C)$ for all $C \in \mathcal{C}$.
We claim that $E$ is a $(1+\frac{\varepsilon}{5})$-approximation solution of $\mathbf{Prob}(Q)$.
To see this, suppose $E^* \subseteq E(Q)$ is an optimal solution of $\mathbf{Prob}(Q)$.
The pair $(f_{E^*},g_{E^*})$ is valid by Observation~\ref{obs-sgnf}.
So we have $\mathsf{cost}(f_E,g_E) \leq \mathsf{cost}(f,g) \leq (1+\frac{\varepsilon}{11}) \cdot \mathsf{cost}(f_{E^*},g_{E^*})$.
Then by Observation~\ref{obs-sgnfcost},
\begin{equation*}
    \mathsf{cost}_Q(E) \leq \left(1+\frac{\varepsilon}{10}\right)\left(1+\frac{\varepsilon}{11}\right) \cdot \mathsf{cost}_Q(E^*) \leq \left(1+\frac{\varepsilon}{5}\right) \cdot \mathsf{cost}_Q(E^*).
\end{equation*}

Now we show how to compute a valid pair $(f,g)$ satisfying $\mathsf{cost}(f,g) \leq (1+\frac{\varepsilon}{11}) \cdot \mathsf{cost}(f',g')$ for any valid pair $(f',g')$.
This is done by formulating an integer linear program with $O_\varepsilon(1)$ variables and applying the FPT algorithm for integer linear programming~\cite{CyganFKLMPPS15}.
We view the values $f(C,C')$ for $(C,C') \in \mathcal{C} \times \mathcal{C}$ and $g(C)$ for $C \in \mathcal{C}$ as integer variables.
The objective function to be minimized is $\mathsf{cost}(f,g)$, which is a linear function of the variables.
To check if $(f,g)$ is valid, it is equivalent to check if $(f,g)$ satisfies the two conditions in Observation~\ref{obs-sgnf}, which can be described as linear constraints on the variables.
Therefore, finding a valid pair $(f,g)$ with minimum $\mathsf{cost}(f,g)$ is equivalent to assigning (non-negative) integer values to the variables to minimize the objective function under the linear constraints.
Note that this is not exactly an integer linear program, because the coefficients of the objective function are real numbers (while the coefficients of the linear constraints are all integers).
However, as we only need a valid pair $(f,g)$ with \textit{approximately} minimum $\mathsf{cost}(f,g)$, we can round these real coefficients to integers without changing the program too much.
Observe that for any distinct $C,C' \in \mathcal{C}$, $\mathsf{dist}(C,C') \geq \frac{w}{r}$.
Furthermore, $\phi(p) \geq \frac{w}{r}$ for all $p \in Q$ and thus $\phi(C) \geq \frac{w}{r}$ for all $C \in \mathcal{C}$.
We replace every coefficient $\eta$ in the objective function with a new coefficient $\lfloor \lceil \frac{11}{\varepsilon} \rceil \cdot \frac{\eta}{w/r} \rfloor$ and obtain a new objective function
\begin{equation*}
    \mathsf{cost}'(f,g) = \sum_{(C,C') \in \mathcal{C} \times \mathcal{C}} \left(f(C,C') \cdot \left\lfloor \left\lceil \frac{11}{\varepsilon} \right\rceil \cdot \frac{\mathsf{dist}(C,C')}{w/r} \right\rfloor \right) + \sum_{C \in \mathcal{C}} \left(g(C) \cdot \left\lfloor \left\lceil \frac{11}{\varepsilon} \right\rceil \cdot \frac{\phi(C)}{w/r} \right\rfloor \right).
\end{equation*}

\begin{observation}
Let $(f,g)$ be a valid pair with minimum $\mathsf{cost}'(f,g)$.
Then $\mathsf{cost}(f,g) \leq (1+\frac{\varepsilon}{11}) \cdot \mathsf{cost}(f',g')$ for any valid pair $(f',g')$.
\end{observation}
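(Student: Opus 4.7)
The plan is a term-wise rounding analysis comparing $\mathsf{cost}$ and $\mathsf{cost}'$. By construction, $\mathsf{cost}'$ is obtained from $\mathsf{cost}$ by replacing every real coefficient $\eta$ (namely each $\mathsf{dist}(C,C')$ and each $\phi(C)$) with the integer $\lfloor \mu \eta \rfloor$, where $\mu := \lceil 11/\varepsilon \rceil / (w/r)$; apart from this substitution, both expressions are nonnegative-integer linear combinations of the same variables $f(C,C')$ and $g(C)$. I will show that this term-wise rounding distorts the objective by at most a multiplicative $(1 \pm \varepsilon/11)$-factor, so that a valid pair minimizing $\mathsf{cost}'$ is automatically a $(1+\varepsilon/11)$-approximate minimizer of $\mathsf{cost}$.

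The key ingredient is a uniform lower bound on the original coefficients, which was already recorded immediately before the definition of $\mathsf{cost}'$: for any distinct $C,C' \in \mathcal{C}$, $\mathsf{dist}(C,C') \geq w/r$, and for every $C \in \mathcal{C}$, $\phi(C) \geq w/r$. Both follow from the side-length of the cells in $\mathcal{C}$. Consequently every coefficient $\eta$ satisfies $\mu\eta \geq \lceil 11/\varepsilon\rceil$, and hence
$$\bigl(1 - \tfrac{\varepsilon}{11}\bigr)\,\mu\eta \;\leq\; \mu\eta - 1 \;\leq\; \lfloor \mu\eta \rfloor \;\leq\; \mu\eta.$$
Summing this per-coefficient estimate with the nonnegative weights $f(C,C')$ and $g(C)$ (which appear identically in $\mathsf{cost}$ and in $\mathsf{cost}'$), I get, for every valid pair $(h_1,h_2)$,
$$\bigl(1 - \tfrac{\varepsilon}{11}\bigr)\,\mu\cdot\mathsf{cost}(h_1,h_2) \;\leq\; \mathsf{cost}'(h_1,h_2) \;\leq\; \mu\cdot\mathsf{cost}(h_1,h_2).$$

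To conclude, let $(f,g)$ be a valid pair minimizing $\mathsf{cost}'$ and let $(f',g')$ be any valid pair. Chaining the lower estimate at $(f,g)$ with the upper estimate at $(f',g')$ and using $\mathsf{cost}'(f,g) \leq \mathsf{cost}'(f',g')$ gives
$$\bigl(1-\tfrac{\varepsilon}{11}\bigr)\,\mu\,\mathsf{cost}(f,g) \;\leq\; \mathsf{cost}'(f,g) \;\leq\; \mathsf{cost}'(f',g') \;\leq\; \mu\,\mathsf{cost}(f',g'),$$
which rearranges to $\mathsf{cost}(f,g) \leq \mathsf{cost}(f',g')/(1 - \varepsilon/11) \leq (1+\varepsilon/11)\,\mathsf{cost}(f',g')$, where the final step uses $\varepsilon \leq 1$ and the standard estimate $1/(1-x) \leq 1 + O(x)$ (the constant $11$ is set so that this crude absorption works). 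I do not anticipate any real obstacle: the whole argument is bookkeeping, and the only substantive geometric input is the uniform lower bound $\eta \geq w/r$, which is precisely the reason for rescaling by $\mu$ before flooring.
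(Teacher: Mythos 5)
Your high-level plan is exactly the paper's: use the uniform lower bound $\eta \geq w/r$ on the coefficients to show that flooring $\mu\eta$ (with $\mu = \lceil 11/\varepsilon\rceil/(w/r)$) distorts each term by a small multiplicative factor, then chain through the optimality of $(f,g)$ for $\mathsf{cost}'$. However, your per-coefficient bound goes the wrong way and the final "absorption" step is false.

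Concretely, you establish $(1-\tfrac{\varepsilon}{11})\mu\eta \leq \lfloor\mu\eta\rfloor \leq \mu\eta$, which after chaining gives $\mathsf{cost}(f,g) \leq \frac{1}{1-\varepsilon/11}\,\mathsf{cost}(f',g')$. You then assert $\frac{1}{1-\varepsilon/11} \leq 1+\tfrac{\varepsilon}{11}$, citing "$1/(1-x) \leq 1+O(x)$" and claiming "the constant $11$ is set so that this crude absorption works." But for every $x \in (0,1)$ one has $\frac{1}{1-x} = 1 + \frac{x}{1-x} > 1 + x$, with strict inequality regardless of how small $x$ is; no choice of constant repairs this. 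So your argument proves the weaker statement $\mathsf{cost}(f,g) \leq \frac{1}{1-\varepsilon/11}\mathsf{cost}(f',g')$, not the claimed $(1+\tfrac{\varepsilon}{11})$ bound. This is not merely cosmetic: the observation is invoked downstream to conclude $(1+\tfrac{\varepsilon}{10})(1+\tfrac{\varepsilon}{11}) \leq 1+\tfrac{\varepsilon}{5}$, which is tight at $\varepsilon=1$ (it reads $\tfrac{11}{10}\cdot\tfrac{12}{11}=\tfrac{12}{10}$); replacing $1+\tfrac{\varepsilon}{11}$ with $\frac{1}{1-\varepsilon/11}=\tfrac{11}{10}$ at $\varepsilon=1$ gives $\tfrac{11}{10}\cdot\tfrac{11}{10}=1.21 > 1.2$, so the slack is exhausted.

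The fix is to phrase the per-coefficient bound in the multiplicative form the paper uses: from $\lfloor\mu\eta\rfloor \geq \lceil 11/\varepsilon\rceil$ you get $1 \leq \tfrac{\varepsilon}{11}\lfloor\mu\eta\rfloor$, hence $\mu\eta \leq \lfloor\mu\eta\rfloor + 1 \leq (1+\tfrac{\varepsilon}{11})\lfloor\mu\eta\rfloor$. Summing with the nonnegative weights gives $\mu\cdot\mathsf{cost}(f,g) \leq (1+\tfrac{\varepsilon}{11})\mathsf{cost}'(f,g)$, and combined with $\mathsf{cost}'(f',g') \leq \mu\cdot\mathsf{cost}(f',g')$ and $\mathsf{cost}'(f,g) \leq \mathsf{cost}'(f',g')$ this yields exactly $\mathsf{cost}(f,g) \leq (1+\tfrac{\varepsilon}{11})\mathsf{cost}(f',g')$. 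Everything else in your write-up — the lower bound $\eta \geq w/r$, the linearity over nonnegative weights, the chaining via optimality — is correct and matches the paper.
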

\begin{proof}
We first claim that $\lceil \frac{11}{\varepsilon} \rceil \cdot \frac{\eta}{w/r} \leq (1+\frac{\varepsilon}{11}) \cdot \lfloor \lceil \frac{11}{\varepsilon} \rceil \cdot \frac{\eta}{w/r} \rfloor$ for any coefficient $\eta$ of the $\mathsf{cost}$ function.
As observed before, $\eta \geq \frac{w}{r}$ and thus $\lfloor \lceil \frac{11}{\varepsilon} \rceil \cdot \frac{\eta}{w/r} \rfloor \geq \lceil \frac{11}{\varepsilon} \rceil$.
So we have
\begin{equation*}
    \left\lceil \frac{11}{\varepsilon} \right\rceil \cdot \frac{\eta}{w/r} \leq \left\lfloor \left\lceil \frac{11}{\varepsilon} \right\rceil \cdot \frac{\eta}{w/r} \right\rfloor + 1 \leq \left(1+\frac{\varepsilon}{11}\right) \cdot \left\lfloor \left\lceil \frac{11}{\varepsilon} \right\rceil \cdot \frac{\eta}{w/r} \right\rfloor.
\end{equation*}
This implies that $\lceil \frac{11}{\varepsilon} \rceil \cdot \frac{\mathsf{cost}(f,g)}{w/r} \leq (1+\frac{\varepsilon}{11}) \cdot \mathsf{cost}'(f,g)$.
By the choice of $(f,g)$, it then follows that $\lceil \frac{11}{\varepsilon} \rceil \cdot \frac{\mathsf{cost}(f,g)}{w/r} \leq (1+\frac{\varepsilon}{11}) \cdot \mathsf{cost}'(f',g')$ for any valid pair $(f',g')$.
By the definition of $\mathsf{cost}'$, we directly have $\mathsf{cost}'(f',g') \leq \lceil \frac{11}{\varepsilon} \rceil \cdot \frac{\mathsf{cost}(f',g')}{w/r}$.
Therefore, for any valid pair $(f',g')$, $\lceil \frac{11}{\varepsilon} \rceil \cdot \frac{\mathsf{cost}(f,g)}{w/r} \leq (1+\frac{\varepsilon}{11}) \cdot \lceil \frac{11}{\varepsilon} \rceil \cdot \frac{\mathsf{cost}(f',g')}{w/r}$ and thus $\mathsf{cost}(f,g) \leq (1+\frac{\varepsilon}{11}) \cdot \mathsf{cost}(f',g')$.
\end{proof}

With the above observation, it suffices to minimize the new objective function $\mathsf{cost}'(f,g)$, which is a linear function of the variables with integer coefficients.
Therefore, our task becomes solving an integer linear program.
The numbers of variables and constraints are both $O_\varepsilon(1)$.
The coefficients in the linear constraints are bounded by $|Q|^{O(1)}$.
To bound the coefficients in the objective function, observe that $\mathsf{dist}(C,C') \leq d w$ for any $C,C' \in \mathcal{C}$ (because the side-length of $\Box$ is $w$) and thus $\frac{\mathsf{dist}(C,C')}{w/r} = O_\varepsilon(1)$.
Also, since $Q$ contains points of at least two colors, every point $p \in Q$ has a foreign neighbor in $Q$ (which has distance at most $dw$ to $p$) and thus $\phi(p) \leq d w$.
It follows that $\phi(C) \leq d w$ for all $C \in \mathcal{C}$ and thus $\frac{\phi(C)}{w/r} = O_\varepsilon(1)$.
So the coefficients in the objective function are bounded by $O_\varepsilon(1)$.
The entire integer linear program can be encoded in $O_\varepsilon(\log |Q|)$ bits.
It is well-known~\cite{CyganFKLMPPS15} that an integer linear program encoded in $L$ bits with $s$ variables can be solved in $s^{O(s)} \cdot L^{O(1)}$ time.
Therefore, our program can be solved in $O_\varepsilon(\log^{O(1)} |Q|)$ time.
Including the time for constructing the program and finding the set $E$ using Observation~\ref{obs-sgnf}, the total time cost is $O_\varepsilon(|Q|)$, which proves Lemma~\ref{lem-restricted}.

\subsection{Putting everything together} \label{sec-alltogether}

Combining Lemmas~\ref{lem-first}, \ref{lem-second}, and~\ref{lem-restricted}, we can compute a $(1+\varepsilon)$-approximation solution for $\mathbf{Prob}(S)$ in $O_\varepsilon(n)$ time.
Including the time for pre-sorting, we see that geometric many-to-many matching in any fixed dimension admits a $(1+\varepsilon)$-approximation algorithm with running time $O_\varepsilon(n \log n)$, assuming the nearest foreign neighbors of the points are given.

Finally, we provide the last missing piece of our result: how to solve the problem without knowing the nearest foreign neighbors.
While it is well-known that the all-nearest-neighbor problem can be solved in $O(n \log n)$ time in any fixed dimension~\cite{vaidya1989n}, computing all nearest foreign neighbors is much more challenging: it admits an $O(n \log n)$-time algorithm only for $d = 2$~\cite{aggarwal1992optimal} and has a conjectured $\Omega(n^{4/3})$ lower bound for $d \geq 3$~\cite{erickson1995relative}.
However, as we only want to compute an approximation solution, we can actually use $(1+\varepsilon)$-approximate nearest foreign neighbors (instead of the exact nearest foreign neighbors) which can be computed in $O_\varepsilon(n \log n)$ time by Lemma~\ref{lem-anfn}.
For a point $p \in S$, let $\mathsf{ann}(p) \in S$ be a $(1+\varepsilon)$-approximate nearest foreign neighbor of $p$ in $S$.
We use $\phi'(p) = \frac{\mathsf{dist}(p,\mathsf{ann}(p))}{1+\varepsilon}$ as the penalty of a point $p \in S$, where $\mathsf{dist}$ denotes the Euclidean distance.
Note that $\phi'(p) \leq \phi(p) \leq (1+\varepsilon) \cdot \phi'(p)$.
It is easy to see the following analogy of Lemma~\ref{lem-penalized}.

\begin{lemma}
Given a subset $E \subseteq E(S)$, one can compute in $O(n+|E|)$ time another subset $E' \subseteq E(S)$ such that $V(E') = S$ and $\sum_{e \in E'} |e| \leq (1+\varepsilon) \cdot (\sum_{e \in E} |e| + \sum_{p \in S \backslash V(E)} \phi'(p))$.
\end{lemma}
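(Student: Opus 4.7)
The plan is to mirror the construction in Lemma~\ref{lem-penalized} almost verbatim, with $\mathsf{ann}$ in place of $\mathsf{nn}$. Concretely, I would define
\begin{equation*}
E' = E \cup \{(p,\mathsf{ann}(p)) : p \in S \setminus V(E)\}.
\end{equation*}
This construction is well-defined because $\mathsf{ann}(p)$ is a foreign neighbor of $p$, so $(p,\mathsf{ann}(p)) \in E(S)$. By Lemma~\ref{lem-anfn}, the function $\mathsf{ann}$ has been precomputed once upfront, so each newly added edge costs $O(1)$ time to produce, yielding total construction time $O(n+|E|)$.

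Next I would verify $V(E') = S$. Points already covered by $E$ remain covered, while every point $p \in S \setminus V(E)$ becomes an endpoint of the new edge $(p,\mathsf{ann}(p))$; hence every point of $S$ is covered. This step is mechanical.

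The only place where the factor $(1+\varepsilon)$ enters is the cost accounting. By the definition of $\phi'(p) = \mathsf{dist}(p,\mathsf{ann}(p))/(1+\varepsilon)$, the length of each added edge satisfies $|(p,\mathsf{ann}(p))| = \mathsf{dist}(p,\mathsf{ann}(p)) = (1+\varepsilon) \cdot \phi'(p)$. Summing over the new edges,
\begin{equation*}
\sum_{e \in E'} |e| \;=\; \sum_{e \in E} |e| + \sum_{p \in S \setminus V(E)} (1+\varepsilon)\,\phi'(p) \;\leq\; (1+\varepsilon)\left( \sum_{e \in E} |e| + \sum_{p \in S \setminus V(E)} \phi'(p)\right),
\end{equation*}
which is the required bound. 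There is no substantive obstacle here: the previous lemma used the exact equality $|(p,\mathsf{nn}(p))| = \phi(p)$, and our definition of $\phi'$ was calibrated precisely so that the approximate neighbor yields the same identity up to a factor of $1+\varepsilon$. The lemma then composes cleanly with the earlier reductions: a $c$-approximation for the penalized formulation using $\phi'$ as penalties translates into a $c(1+\varepsilon)$-approximation for the original problem, which only perturbs the final approximation factor by a constant amount absorbed by rescaling $\varepsilon$.
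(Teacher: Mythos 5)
Your proof is correct and mirrors the paper's own argument: same construction $E' = E \cup \{(p,\mathsf{ann}(p)) : p \in S \setminus V(E)\}$, same coverage check, and the same cost accounting via $|(p,\mathsf{ann}(p))| = (1+\varepsilon)\phi'(p)$. In fact your version is slightly cleaner, since the paper's proof contains a small typo writing $\phi(p)$ where $\phi'(p)$ is meant.
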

\begin{proof}
Like what we did in the proof of Lemma~\ref{lem-penalized}, we simply set 
\begin{equation*}
    E' = E \cup \{(p,\mathsf{ann}(p)): p \in S \backslash V(E)\}.    
\end{equation*}
Clearly, $V(E') = S$.
For an edge $e = (p,\mathsf{ann}(p))$, we have $|e| = (1+\varepsilon) \cdot \phi(p)$.
Thus, $\sum_{e \in E'} |e| \leq (1+\varepsilon) \cdot (\sum_{e \in E} |e| + \sum_{p \in S \backslash V(E)} \phi'(p))$.
\end{proof}

Now it suffices to consider the penalized formulation with the new penalty function $\phi'$.
The same algorithm still works.
Indeed, in our algorithm, we only use two properties of the old penalty function $\phi$.
First, we need $|e| \geq \max\{\phi(p),\phi(q)\}$ for every edge $e = (p,q) \in E(S)$.
This also holds for $\phi'$.
Second, in the proof of Observation~\ref{obs-sgnfcost}, we need the inequality that for any $p \in Q$, $\phi(C) \leq \phi(p) \leq (1+\frac{\varepsilon}{21}) \cdot \phi(C)$, where $C \in \mathcal{C}$ is the hypercube containing $p$ and $\phi(C) = \min_{p \in Q \cap C} \phi(p)$.
When replacing $\phi$ with $\phi'$ and setting $\phi'(C) = \min_{p \in Q \cap C} \phi'(p)$, we still have $\phi'(C) \leq \phi'(p) \leq (1+O(\varepsilon)) \cdot \phi'(C)$, because $\phi'(p) \leq \phi(p) \leq (1+\varepsilon) \cdot \phi'(p)$.
Therefore, applying the same algorithm with the new penalty function $\phi'$, we can obtain a $(1+O(\varepsilon))$-approximation solution, which is sufficient for the purpose of an approximation scheme, since we can always choose $\varepsilon$ to be smaller than the required approximation ratio by a constant factor.
This completes the proof of Theorem~\ref{thm-main} for the Euclidean case.

\subparagraph{Generalization to $L_p$-norms.}
Our algorithm directly applies to geometric many-to-many matching under the $L_p$-norm for any $p \geq 1$.
The only thing we need to adjust is the parameter $r$ in Section~\ref{sec-subprob}: for different norms, we need to partition the hypercube $\Box$ into different numbers of smaller hypercubes to make Observations~\ref{obs-samecl} and~\ref{obs-sgnfcost} hold.

\section{Conclusion and future work} \label{sec-conclusion}
In this paper, we studied the geometric many-to-many matching problem.
We give a $(1+\varepsilon)$-approximation algorithm with running time $O_\varepsilon(n \log n)$ for geometric many-to-many matching in any fixed dimension under the $L_p$-norm for any $p \geq 1$.
Our result significantly improves and generalizes the previous work on the problem.

We pose several open questions for future study.
First, the running time of our algorithm has an exponential dependency on $\frac{1}{\varepsilon}$, which comes from both the reduction in Section~\ref{sec-first} and the FPT algorithm for integer linear programming used in Section~\ref{sec-subprob}.
It is interesting to see whether one can improve the bound to $(\frac{1}{\varepsilon})^{O(1)} \cdot n \log n$.
Second, we only explored geometric many-to-many matching in the context of approximation algorithms.
Designing fast exact algorithms for the problem seems much more difficult and is an important open problem to be studied.
Finally, as mentioned in the introduction, several variants of geometric many-to-many have been studied in the literature~\cite{Rajabi-AlniB16, abs-1904-05184, abs-1904-03015}.
In these variants, each input point $p \in S$ is associated with a demand $\alpha(p)$ and/or a capacity $\beta(p)$, and a solution $E \subseteq E(S)$ should satisfy that
\begin{equation*}
\alpha(p) \leq |\{e \in E: e \text{ is incident to } p\}| \leq \beta(p)    
\end{equation*}
for all $p \in S$.
The previous study on these variants is restricted in $\mathbb{R}^1$.
Thus, it is natural to ask whether these variants of geometric many-to-many matching also admit approximation schemes with near-linear running time in any fixed dimension and whether the techniques in this paper can be adapted (together with other ideas) to design such algorithms.

\bibliography{my_bib}

\begin{thebibliography}{10}

\bibitem{agarwal2022deterministic}
Pankaj~K Agarwal, Hsien-Chih Chang, Sharath Raghvendra, and Allen Xiao.
\newblock Deterministic, near-linear $\varepsilon$-approximation algorithm for geometric bipartite matching.
\newblock In {\em Proceedings of the 54th Annual ACM SIGACT Symposium on Theory of Computing (STOC)}, pages 1052--1065, 2022.

\bibitem{agarwal1991euclidean}
Pankaj~K Agarwal, Herbert Edelsbrunner, Otfried Schwarzkopf, and Emo Welzl.
\newblock Euclidean minimum spanning trees and bichromatic closest pairs.
\newblock {\em Discrete \& Computational Geometry}, 6:407--422, 1991.

\bibitem{aggarwal1992optimal}
Alok Aggarwal, Herbert Edelsbrunner, Prahakar Raghavan, and Prasoon Tiwari.
\newblock Optimal time bounds for some proximity problems in the plane.
\newblock {\em Information Processing Letters}, 42(1):55--60, 1992.

\bibitem{arora1996polynomial}
Sanjeev Arora.
\newblock Polynomial time approximation schemes for {E}uclidean {TSP} and other geometric problems.
\newblock In {\em Proceedings of the 37th Annual Symposium on Foundations of Computer Science (FOCS)}, pages 2--11. IEEE, 1996.

\bibitem{arora1997nearly}
Sanjeev Arora.
\newblock Nearly linear time approximation schemes for {E}uclidean {TSP} and other geometric problems.
\newblock In {\em Proceedings of the 38th Annual Symposium on Foundations of Computer Science (FOCS)}, pages 554--563. IEEE, 1997.

\bibitem{arya1995euclidean}
Sunil Arya, Gautam Das, David~M Mount, Jeffrey~S Salowe, and Michiel Smid.
\newblock Euclidean spanners: short, thin, and lanky.
\newblock In {\em Proceedings of the 27th Annual ACM Symposium on Theory of Computing (STOC)}, pages 489--498, 1995.

\bibitem{baker1994approximation}
Brenda~S Baker.
\newblock Approximation algorithms for {NP}-complete problems on planar graphs.
\newblock {\em Journal of the ACM}, 41(1):153--180, 1994.

\bibitem{bandyapadhyay2024euclidean}
Sayan Bandyapadhyay, William Lochet, Daniel Lokshtanov, Saket Saurabh, and Jie Xue.
\newblock Euclidean bottleneck steiner tree is fixed-parameter tractable.
\newblock In {\em Proceedings of the 35th Annual ACM-SIAM Symposium on Discrete Algorithms (SODA)}, pages 699--711. SIAM, 2024.

\bibitem{bandyapadhyay2021exact}
Sayan Bandyapadhyay, Anil Maheshwari, and Michiel Smid.
\newblock Exact and approximation algorithms for many-to-many point matching in the plane.
\newblock In {\em 32nd International Symposium on Algorithms and Computation (ISAAC)}. Schloss Dagstuhl-Leibniz-Zentrum f{\"u}r Informatik, 2021.

\bibitem{belhadi2020data}
Hiba Belhadi, Karima Akli-Astouati, Youcef Djenouri, and Jerry Chun-Wei Lin.
\newblock Data mining-based approach for ontology matching problem.
\newblock {\em Applied Intelligence}, 50:1204--1221, 2020.

\bibitem{ben2002restriction}
Amir Ben-Dor, Richard~M Karp, Benno Schwikowski, and Ron Shamir.
\newblock The restriction scaffold problem.
\newblock In {\em Proceedings of the 6th Annual International Conference on Computational Biology}, pages 58--66, 2002.

\bibitem{brazil2014history}
Marcus Brazil, Ronald~L Graham, Doreen~A Thomas, and Martin Zachariasen.
\newblock On the history of the {E}uclidean steiner tree problem.
\newblock {\em Archive for history of exact sciences}, 68:327--354, 2014.

\bibitem{chan2020locality}
Timothy~M Chan, Sariel Har-Peled, and Mitchell Jones.
\newblock On locality-sensitive orderings and their applications.
\newblock {\em SIAM Journal on Computing}, 49(3):583--600, 2020.

\bibitem{ColanninoDHLMRST07}
Justin Colannino, Mirela Damian, Ferran Hurtado, Stefan Langerman, Henk Meijer, Suneeta Ramaswami, Diane~L. Souvaine, and Godfried Toussaint.
\newblock Efficient many-to-many point matching in one dimension.
\newblock {\em Graphs and Combinatorics}, 23:169--178, 2007.

\bibitem{colannino2005faster}
Justin Colannino and Godfried Toussaint.
\newblock Faster algorithms for computing distances between one-dimensional point sets.
\newblock {\em Proceedings of the XI Encuentros de Geometria Computacional}, pages 189--198, 2005.

\bibitem{CyganFKLMPPS15}
Marek Cygan, Fedor~V. Fomin, Lukasz Kowalik, Daniel Lokshtanov, D{\'{a}}niel Marx, Marcin Pilipczuk, Michal Pilipczuk, and Saket Saurabh.
\newblock {\em Parameterized Algorithms}.
\newblock Springer, 2015.

\bibitem{de2018eth}
Mark de~Berg, Hans~L Bodlaender, S{\'a}ndor Kisfaludi-Bak, and Sudeshna Kolay.
\newblock An {ETH}-tight exact algorithm for {E}uclidean {TSP}.
\newblock In {\em Proceedings of the 59th Annual Symposium on Foundations of Computer Science (FOCS)}, pages 450--461. IEEE, 2018.

\bibitem{eiter1997distance}
Thomas Eiter and Heikki Mannila.
\newblock Distance measures for point sets and their computation.
\newblock {\em Acta informatica}, 34(2):109--133, 1997.

\bibitem{erickson1995relative}
Jeff Erickson.
\newblock On the relative complexities of some geometric problems.
\newblock In {\em Proceedings of the 7th Canadian Conference on Computational Geometry (CCCG)}, volume~95, pages 85--90, 1995.

\bibitem{FerdousPK18}
S.~M. Ferdous, Alex Pothen, and Arif Khan.
\newblock New approximation algorithms for minimum weighted edge cover.
\newblock In Fredrik Manne, Peter Sanders, and Sivan Toledo, editors, {\em Proceedings of the 8th {SIAM} Workshop on Combinatorial Scientific Computing, {CSC} 2018, Bergen, Norway, June 6-8, 2018}, pages 97--108. {SIAM}, 2018.

\bibitem{fredman1987fibonacci}
Michael~L Fredman and Robert~Endre Tarjan.
\newblock Fibonacci heaps and their uses in improved network optimization algorithms.
\newblock {\em Journal of the ACM}, 34(3):596--615, 1987.

\bibitem{gabow1989faster}
Harold~N Gabow and Robert~E Tarjan.
\newblock Faster scaling algorithms for network problems.
\newblock {\em SIAM Journal on Computing}, 18(5):1013--1036, 1989.

\bibitem{gallagher20194}
Steven~A Gallagher.
\newblock {\em A 4/3-approximation for Minimum Weight Edge Cover}.
\newblock PhD thesis, Purdue University, 2019.

\bibitem{har2013euclidean}
Sariel Har-Peled, Piotr Indyk, and Anastasios Sidiropoulos.
\newblock Euclidean spanners in high dimensions.
\newblock In {\em Proceedings of the 24th Annual ACM-SIAM Symposium on Discrete Algorithms (SODA)}, pages 804--809. SIAM, 2013.

\bibitem{indyk2007near}
Piotr Indyk.
\newblock A near linear time constant factor approximation for {E}uclidean bichromatic matching (cost).
\newblock In {\em Proceedings of the 18th Annual ACM-SIAM Symposium on Discrete Algorithms (SODA)}, volume~7, pages 39--42. SIAM, 2007.

\bibitem{kaplan2020dynamic}
Haim Kaplan, Wolfgang Mulzer, Liam Roditty, Paul Seiferth, and Micha Sharir.
\newblock Dynamic planar voronoi diagrams for general distance functions and their algorithmic applications.
\newblock {\em Discrete \& Computational Geometry}, 64:838--904, 2020.

\bibitem{keijsper1998efficient}
Judith Keijsper and Rudi Pendavingh.
\newblock An efficient algorithm for minimum-weight bibranching.
\newblock {\em Journal of Combinatorial Theory, Series B}, 73(2):130--145, 1998.

\bibitem{kuhn1956variants}
Harold~W Kuhn.
\newblock Variants of the {H}ungarian method for assignment problems.
\newblock {\em Naval Research Logistics Quarterly}, 3(4):253--258, 1956.

\bibitem{le2022truly}
Hung Le and Shay Solomon.
\newblock Truly optimal {E}uclidean spanners.
\newblock {\em SIAM Journal on Computing}, (0):FOCS19--135, 2022.

\bibitem{march2010fast}
William~B March, Parikshit Ram, and Alexander~G Gray.
\newblock Fast {E}uclidean minimum spanning tree: algorithm, analysis, and applications.
\newblock In {\em Proceedings of the 16th ACM SIGKDD International Conference on Knowledge Discovery and Data Mining}, pages 603--612, 2010.

\bibitem{ni2008fuzzy}
Yaodong Ni.
\newblock Fuzzy minimum weight edge covering problem.
\newblock {\em Applied mathematical modelling}, 32(7):1327--1337, 2008.

\bibitem{norman1959algorithm}
Robert~Z Norman and Michael~O Rabin.
\newblock An algorithm for a minimum cover of a graph.
\newblock {\em Proceedings of the American Mathematical Society}, 10(2):315--319, 1959.

\bibitem{RaghvendraA20}
Sharath Raghvendra and Pankaj~K. Agarwal.
\newblock A near-linear time $\varepsilon$-approximation algorithm for geometric bipartite matching.
\newblock {\em Journal of the ACM}, 67(3):18:1--18:19, 2020.

\bibitem{Rajabi-AlniB16}
Fatemeh Rajabi{-}Alni and Alireza Bagheri.
\newblock An ${O}(n^2)$ algorithm for the limited-capacity many-to-many point matching in one dimension.
\newblock {\em Algorithmica}, 76(2):381--400, 2016.

\bibitem{abs-1904-05184}
Fatemeh Rajabi-Alni and Alireza Bagheri.
\newblock A fast and efficient algorithm for many-to-many matching of points with demands in one dimension.
\newblock {\em arXiv preprint arXiv:1904.05184}, 2019.

\bibitem{abs-1904-03015}
Fatemeh Rajabi-Alni and Alireza Bagheri.
\newblock A faster algorithm for the limited-capacity many-to-many point matching in one dimension.
\newblock {\em arXiv preprint arXiv:1904.03015}, 2019.

\bibitem{ristoski2018machine}
Petar Ristoski, Petar Petrovski, Peter Mika, and Heiko Paulheim.
\newblock A machine learning approach for product matching and categorization.
\newblock {\em Semantic web}, 9(5):707--728, 2018.

\bibitem{shamos1975closest}
Michael~Ian Shamos and Dan Hoey.
\newblock Closest-point problems.
\newblock In {\em 16th Annual Symposium on Foundations of Computer Science (FOCS)}, pages 151--162. IEEE, 1975.

\bibitem{smith1992find}
Warren~D Smith.
\newblock How to find steiner minimal trees in {E}uclidean $d$-space.
\newblock {\em Algorithmica}, 7:137--177, 1992.

\bibitem{toussaint2004geometry}
Godfried Toussaint.
\newblock The geometry of musical rhythm.
\newblock In {\em Japanese Conference on Discrete and Computational Geometry}, pages 198--212. Springer, 2004.

\bibitem{toussaint2004comparison}
Godfried~T Toussaint et~al.
\newblock A comparison of rhythmic similarity measures.
\newblock In {\em ISMIR}, 2004.

\bibitem{vaidya1989geometry}
Pravin~M Vaidya.
\newblock Geometry helps in matching.
\newblock {\em SIAM Journal on Computing}, 18(6):1201--1225, 1989.

\bibitem{vaidya1989n}
Pravin~M Vaidya.
\newblock An ${O}(n \log n)$ algorithm for the all-nearest-neighbors problem.
\newblock {\em Discrete \& Computational Geometry}, 4:101--115, 1989.

\bibitem{varadarajan1998divide}
Kasturi~R Varadarajan.
\newblock A divide-and-conquer algorithm for min-cost perfect matching in the plane.
\newblock In {\em Proceedings of the 39th Annual Symposium on Foundations of Computer Science (FOCS)}, pages 320--329. IEEE, 1998.

\bibitem{varadarajan1999approximation}
Kasturi~R Varadarajan and Pankaj~K Agarwal.
\newblock Approximation algorithms for bipartite and non-bipartite matching in the plane.
\newblock In {\em Proceedings of the 10th Annual ACM-SIAM Symposium on Discrete Algorithms (SODA)}, volume~99, pages 805--814. SIAM, 1999.

\bibitem{white1971minimum}
Lee~J White.
\newblock Minimum covers of fixed cardinality in weighted graphs.
\newblock {\em SIAM Journal on Applied Mathematics}, 21(1):104--113, 1971.

\end{thebibliography}

\end{document}